\numberwithin{equation}{section}
\newtheorem{theorem}[equation]{Theorem}
\newtheorem*{theorem*}{Theorem}
\newtheorem{corollary}[equation]{Corollary}
\newtheorem{lemma}[equation]{Lemma}
\newtheorem{proposition}[equation]{Proposition}
\theoremstyle{definition}
\newtheorem{definition}[equation]{Definition}
\newtheorem*{definition*}{Definition}
\theoremstyle{remark}
\newtheorem{example}[equation]{Example}
\newcommand{\pr}{\mathbb P}
\newcommand{\expected}{\mathbb E}
\newcommand{\Mc}{\mathcal{M}}
\newcommand{\Pm}{\mathbf{P}}
\newcommand{\St}{S}
\newcommand{\init}{s_{in}}
\newcommand{\Lab}{\textit{Obs}}
\newcommand{\strategy}{\sigma}
\newcommand{\MDP}{\mathsf{M}}
\newcommand{\restartMDP}{\mathsf{M_{r}}}
\newcommand{\restartMDPst}[1]{\mathsf{M}_{\mathsf{r}}^{#1}}
\newcommand{\restact}{\mathsf{r}}
\newcommand{\tail}[1]{\textit{tail}({#1})}
\newcommand{\Actions}{A}
\newcommand{\transitions}{\Delta}
\newcommand{\Distributions}{\mathcal{D}}
\newcommand{\finitepath}{\pi}
\newcommand{\dra}{\mathcal{A}}
\newcommand{\Mcdra}{\Mc \otimes \dra}
\newcommand{\draS}{Q}
\newcommand{\draTr}{\gamma}
\newcommand{\draInit}{q_0}
\newcommand{\draAcc}{Acc}
\newcommand{\run}{\rho}
\newcommand{\mypath}{\pi}
\newcommand{\runs}{\mathsf{Runs}}
\newcommand{\cone}{\mathsf{Cone}}
\newcommand{\trerr}[1]{\xi}
\newcommand{\mperr}[1]{\zeta}
\newcommand{\rem}[1]{}
\newcommand{\acron}[1]{\textsmaller{#1}}
\newcommand{\BSCC}{\acron{BSCC}}
\newcommand{\DRA}{\acron{DRA}}
\newcolumntype{H}{>{\setbox0=\hbox\bgroup}c<{\egroup}@{}}
\newcommand{\Pg}{P_\text{good}}
\newcommand{\Esteps}{\expected[S]}
\newcommand{\Z}{\mathbb{Z}}
\newcommand{\N}{\mathbb{N}}
\newcommand{\bigO}{\mathbb{O}}
\newcommand{\rabin}{\mathcal{R}}
\newcommand{\Rm}{R_\textnormal{m}}
\newcommand{\Pmax}{P_\textnormal{m}}
\newcommand{\sumi}{\sum^\infty}
\newcommand{\RgBSSC}{r_\gamma}
\newcommand{\restart}{\#\restact}
\newcommand{\diamondone}{%
	\sbox0{$\lozenge$}%
	\usebox0\kern-.5\wd0\clap{\raisebox{.1ex}{\scalebox{.7}[1]{1}}}\kern.5\wd0%
}
\newcommand\vincent[1]{\noindent{\textcolor{magenta}{[VPG: #1]}}}
\tikzstyle{state}+=[minimum size = 8mm, inner sep=0,outer sep=1]
\tikzset{->,>=stealth'}
\definecolor{wwhite}{gray}{1}
\title{Black-box Testing Liveness Properties of Partially Observable Stochastic Systems}
\author{Javier Esparza\footnote{Technical University Munich}~~and Vincent Grande\footnote{RWTH Aachen University, funded by the German Research Council (DFG) within Research Training Group 2236 (UnRAVeL)}}
\date{\vspace{-5ex}}
\begin{document}
	\maketitle
\begin{abstract}
We study black-box testing for stochastic systems and arbitrary $\omega$-regular specifications, explicitly including liveness properties.  We are given a finite-state probabilistic system that we can only execute from the initial state. We have no information on the number of reachable states, or on the probabilities; further, we can only partially observe the states. The only action we can take is to restart the system. We design restart strategies guaranteeing that, if the specification is violated with non-zero probability, then w.p.1 the number of restarts is finite, and the infinite run executed after the last restart violates the specification. This improves on previous work that required full observability. We obtain asymptotically optimal upper bounds on the expected number of steps until the last restart. We conduct experiments on a number of benchmarks, and show that our strategies allow one to find violations in Markov chains much larger than the ones considered in previous work.
\end{abstract}


\newcommand{\lang}{L}
\section{Introduction}
Black-box testing is a fundamental analysis technique when the user does not have access to the design or the internal structure of a system \cite{LeeY96,PeledVY02}. Since it only examines one run of the system at a time, it is computationally cheap, which makes it often the only applicable method for large systems. 

We study the black-box testing problem for finite-state probabilistic systems and $\omega$-regular specifications: Given an $\omega$-regular specification, the problem consists of finding a run of the program that violates the property, assuming that such runs have nonzero probability. 

Let us describe our assumptions in more detail. We do not have access to the code of the system or its internal structure, and we do not know any upper bound on the size of its state space. We can repeatedly execute the system, restarting it at any time. W.l.o.g.~we assume that all runs of the system are infinite.  We do not assume full observability of the states of the system, only that we can observe whether the atomic propositions of the property are currently true or false. For example, if the property states that a system variable, say $x$, should have a positive value infinitely often, then we only assume that at each state we can observe the sign of $x$; letting $\Sigma$ denote the set of possible observations, we have $\Sigma = \{ +, - \}$, standing for a positive and a zero or negative value, respectively (in the rest of the introduction  we shorten ``zero or negative'' to ``negative''). Every system execution induces an observation, that is, an element of $\Sigma^\omega$.  The violations of the property are the $\omega$-words $V \subseteq \Sigma^\omega$  containing only finitely many occurrences of $+$. 

Our goal is to find a \emph{strategy} that decides after each step whether to abort the current run and restart the system, or continue the execution of the current run. The strategy must  ensure that some run that violates the property, that is, a run whose observation belongs to $V$, is eventually executed. The strategy decides depending on the observations made so far. Formally, given $\Sigma$ and the set of actions  $A = \{\restact, \mathsf{c} \}$ (for ``restart'' and ``continue'') a strategy for $V$ is a mapping  from $(\Sigma \times A)^*\Sigma$, the sequence of observations and actions executed so far, to $A$, the next decision.  Our goal is to find a strategy $\sigma$ satisfying the following property:

\begin{quote}
For every finite-state program $P$ over $\Sigma$, if $V\subseteq \Sigma^\omega$ has positive probability and the runs of $P$ are restarted according to $\sigma$, then w.p.1  the number of restarts is finite, and the observation of the run executed after the last restart belongs to $V$.
\end{quote}

Observe that it is not clear that such strategies exist. They are easy to find for safety properties, where the fact that a run violates the property is witnessed by a \emph{finite} prefix\footnote{One can choose for $\sigma$ the strategy ``after the $n$-th reset, execute $n$ steps; if this finite execution is not a witness, restart, otherwise continue forever.'' Indeed, if the shortest witness has length $k$, then for every $n\geq k$, after the $n$-th restart the strategy executes a witness with positive probability, and so it eventually executes one w.p.1.}, but for liveness properties there is no such prefix in general.
We show that these strategies exist for every $\omega$-regular language $V$. Moreover, the strategies only need to maintain a number of counters that depends only on $V$, and not on the program. So in order to restart $P$ according to $\sigma$ one only needs logarithmic memory in the length of the current sequence. 
\begin{example}
To give a first idea of why these strategies also exist for liveness properties, consider the property over $\Sigma = \{ +, -\}$ stating that a variable $x$ should have a positive value only finitely often. The runs violating the property are those that visit $+$-states infinitely often. Our results show that the following strategy works in detecting a run violating the property (among others): 

\begin{quote}
After the $n$-th restart, repeatedly execute blocks of $2n$ steps. If at some point after executing the first block \emph{the second half} of the concatenation of the blocks executed so far contains only negative states, then restart. 
\end{quote}
For example, assume there have been $4$ restarts. Then the strategy repeatedly executes blocks of $8$ steps. If after executing $1,2,3, \ldots$ of these blocks the last $4, 8, 16, \ldots$ states are negative, then the strategy restarts for the $5$th time. If that is never the case, then there are only $4$ restarts. Figure \ref{fig:ideal} shows a family of Markov chains for which naive strategies do not work, but the above strategy does: almost surely the number of restarts is finite and the run after the last restart visits the rightmost state infinitely often. Observe that for every $n \geq 0$ the family exhibits executions that visit $+$ states at least $n$ times, and executions that visit a $+$ state at most once every $n$ steps.

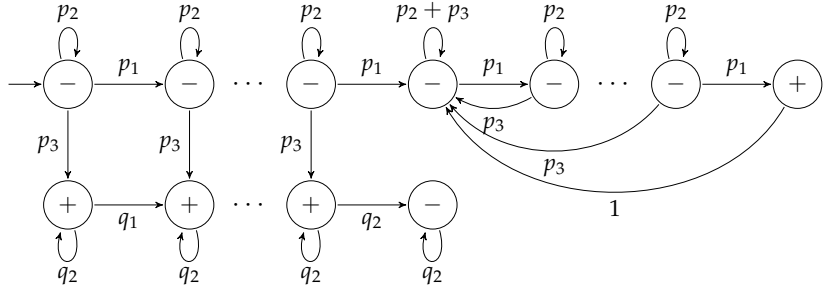
\begin{figure}[t]
	\centering
	\scalebox{0.8}{
		\begin{tikzpicture}
			\def\d{1}
			\node[state,initial,initial text=] (10) at (0,\d){$-$};
			\node[state] (20) at (2,\d){$-$};
			\node (dots) at (3,\d){\large $\cdots$};
			\node[state] (n0) at (4,\d){$-$};
			\node[state] (l1) at (6,\d){$-$};
			\node[state] (l2) at (8,\d){$-$};
			\node (dots) at (9,\d){\large $\cdots$};
			\node[state] (l3) at (10,\d){$-$};
			\node[state] (l4) at (12,\d){$+$};
			\node[state] (21) at (0,-\d){$+$};
			\node[state] (31) at (2,-\d){$+$};
			\node (dots) at (3,-\d){\large $\cdots$};
			\node[state] (41) at (4,-\d){$+$};
			\node[state] (n1) at (6,-\d){$-$};
			
			\path[->]
			(10) edge node[above]{$p_1$} (20)(10) edge node[left]{$p_3$} (21)
			(10) edge[loop above] node[above=-1pt]{$p_2$} ()
			(20) edge[loop above] node[above=-1pt]{$p_2$} ()	
			(20) edge node[left]{$p_3$} (31)	
			(n0) edge node[above]{$p_1$} (l1)
			(n0) edge[loop above] node[above=-1pt]{$p_2$} ()
			(n0) edge node[left]{$p_3$} (41)
			(l1) edge[loop above] node[above=-1pt]{$p_2+p_3$} ()
			(l2) edge[loop above] node[above=-1pt]{$p_2$} ()
			(l3) edge[loop above] node[above=-1pt]{$p_2$} ()
			(l1) edge node[above]{$p_1$} (l2)
			(l3) edge node[above]{$p_1$} (l4)
			(l2) edge[bend left=30] node[below]{$p_3$} (l1)
			(l3) edge[bend left=50] node[below]{$p_3$} (l1)
			(l4) edge[bend left=60] node[below]{$1$} (l1)
			(21) edge node[below]{$q_1$} (31)
			(21) edge[loop below] node[below=-1pt]{$q_2$} ()
			(31) edge[loop below] node[below=-1pt]{$q_2$} ()
			(41) edge[loop below] node[below=-1pt]{$q_2$} ()
			(41) edge node[below]{$q_2$} (n1)
			(n1) edge[loop below] node[below=-1pt]{$q_2$} ();
			
	\end{tikzpicture}}
	\caption{A family of partially observable Markov chains}
	\label{fig:ideal}
\end{figure}
\end{example}
We also obtain asymptotically optimal upper bounds on the expected time until the last restart, that is, on the time until the execution of the run violating the property starts. The bounds depend on two parameters of the Markov chain associated to the program, called  the \emph{progress radius} and the \emph{progress probability}. An important part of our contribution is the identification of these parameters as the key ones to analyze.

While our results are stated in an abstract setting, they easily translate into practice. In a practical scenario, on top of the values of the atomic propositions, we can also observe useful debugging information, like the values of some variables. We let a computer execute runs of the system for some fixed time $t$ according to the strategy $\sigma$. If at time $t$ we observe that the last restart took place a long time ago, then we stop testing and return the run executed since the last restart as candidate for a violation of the property. In the experimental section of our paper we use this scenario to detect errors in \emph{population protocols}, a model of distributed computation, whose state space is too large to find them by other means.

\smallskip\noindent \textbf{Related work.}  There is a wealth of literature on black-box testing and black-box checking \cite{LeeY96,PeledVY02}, but the underlying models are not probabilistic and the methods require to know an upper bound on the number of states. Work on probabilistic model-checking assumes that (a model of) the system is known \cite{Baier2008}. There are also works on black-box verification of probabilistic systems using statistical model checking of statistical hypothesis testing \cite{YounesS02,Sen04,SenVA05,Younes05,YounesCZ10} (see also \cite{LarsenL16a,LegayDB10} for surveys on statistical model checking). They consider a different problem: we focus on producing a counterexample run, while the goal of black-box verification is to accept or reject a hypothesis on the probability of the runs that satisfy a property. 
Our work is also related to the \emph{runtime enforcement problem} \cite{Schneider00,BasinJKZ13,LigattiBW09,FalconeMFR11,FalconeP19}, which also focus on identifying violations of a property. However, in these works either the setting is not probabilistic, or only a subset of the $\omega$-regular properties close to saftey properties is considered.  Finally, the paper closest to ours is \cite{EKKW21}, which considers the same problem, but for fully observable systems. In particular, in the worst case the strategies introduced in \cite{EKKW21} require to store the full sequence of states visited along a run, and so they use linear memory in the length of the current sequence, instead of logarithmic memory, as is the case for our strategy. 

\smallskip\noindent \textbf{Structure of the paper.} The paper is organized as follows. Section \ref{sec:prelims} contains preliminaries. Section \ref{sec:blackboxtesting} introduces the black-box testing problem for arbitrary $\omega$-regular languages with partial observability, and shows that it can be reduced 
to the problem for canonical languages called the Rabin languages. Section \ref{sec:strategy} presents our black-box strategies for the Rabin languages, and proves them correct. Section \ref{sec:quantitative} obtains asymptotically optimal upper bounds on the time to the last restart. Section \ref{sec:experiments} reports some experimental results.


\section{Preliminaries}
\label{sec:prelims}

\textbf{Directed graphs.} A directed graph is a pair $G=(V, E)$, where $V$ is the set of nodes and 
$E \subseteq V\times V$ is the set of edges. A path (infinite path) of $G$ is a finite (infinite) sequence 
$\mypath = v_0, v_1, \ldots$ of nodes such that $(v_i, v_{i+1}) \in E$ for every $i=0,1, \ldots$. A path consisting only of one node is \emph{empty}.
Given two vertices $v, v' \in V$, the \emph{distance} from $v$ to $v'$ is the length of a shortest path from $v$ to $v'$, and the distance from $v$ to a set $V' \subseteq V$ is the minimum over all $v' \in V'$ of the distance from $v$ to $v'$.

A graph $G$ is strongly connected if for every two vertices $v, v'$ there is a path leading from $v$ to $v'$. A graph $G'=(V',E')$ is a subgraph of $G$, denoted $G' \preceq G$, if $V' \subseteq V$ and $E' \subseteq E \cap (V' \times V')$; we write $G' \prec G$ if $G' \preceq G$ and $G'\neq G$. A graph $G' \preceq G$ is a strongly connected component (\acron{SCC}) of $G$ if it is strongly connected and no graph $G''$ satisfying $G' \prec G'' \preceq G$ is strongly connected. An \acron{SCC} $G'=(V',E')$ of $G$  is a bottom \acron{SCC} (\BSCC{}) if $v \in V'$ and $(v, v') \in E$ imply $v' \in V'$.

\medskip\noindent\textbf{Partially observable Markov chains.} Fix a finite set $\Sigma$ of \emph{observations}. A \emph{partially observable Markov chain} is a tuple $\Mc = (\St, \init, \Sigma, \Lab, \Pm)$, where
\begin{itemize}
\item $\Sigma$ is a set of \emph{observations};
\item $\St$ is a finite set of \emph{states} and $\init \in \St$ is the \emph{initial} state;
\item $\Lab \colon \St \to \Sigma$ is an \emph{observation function} that assigns to every state an observation; and
\item $\Pm \;\colon\; \St \times \St \to [0,1]$ is the transition probability matrix, such that for every $s\in \St$ it holds $\sum_{s'\in \St} \Pm(s,s') = 1$,
\end{itemize}
Intuitively, $\Lab(s)$ models the information we can observe when the chain visits $s$. For example, if $s$ is the state of a program,
consisting of the value of the program counter and the values of all variables, $\Lab(s)$ could be just the values of the program counter, or the values of a subset of public variables.
The graph of $\Mc$ has $\St$ as set of nodes and $\{ (s, s') \mid \Pm(s,s') > 0\}$ as set of edges. Abusing language,
we also use $\Mc$ to denote the graph of $\Mc$.
A \emph{run} of $\Mc$ is an infinite path $\run = s_0 s_1 \cdots$ of $\Mc$; we let $\run[i]$ denote the state $s_i$.
The sequence $\Lab(\run) \coloneqq  \Lab(s_0) \Lab(s_1) \cdots$ is the \emph{observation} associated to $\run$. 
Each path $\mypath$ in $\Mc$ determines the set of runs $\cone(\mypath)$ consisting of all runs that start with $\mypath$.
To $\Mc$ we assign the probability space $
(\runs,\mathcal F,\pr)$, where $\runs$ is the set of all runs in $\Mc$, $\mathcal F$ is the $\sigma$-algebra generated by all $\mathsf{Cone}(\mypath)$,
and $\pr$ is the unique probability measure such that
$\pr[\mathsf{Cone}(s_0s_1\cdots s_k)] =
\mu(s_0)\cdot\prod_{i=1}^{k} \Pm(s_{i-1},s_i)$, where the empty product equals $1$.
The expected value of a random variable $f\colon\runs\to\mathbb R$ is $\expected[f]=\int_\runs f\ d\,\pr$.


\medskip\noindent\textbf{Partially Observable Markov Decision Processes.} A \emph{$\Sigma$-observable Markov Decision Process} ($\Sigma$-\acron{MDP}) is a tuple $\MDP = (\St,  \init, \Sigma, \Lab, \Actions, \transitions)$, where $\St,  \init, \Sigma, \Lab$ are as for Markov chains,
$\Actions$ is a finite set of \emph{actions}, and $\transitions \colon  \St \times \Actions \to \Distributions(\St)$ is a \emph{transition function} that for each state $s$ and action $a \in \Actions(s)$ yields a probability 
distribution over successor states. The probability of state $s'$ in this distribution is denoted $\transitions(s, a, s')$.

\medskip\noindent\textbf{Strategies.}  A \emph{strategy} on $\Sigma$-\acron{MDP}s with $A$ as set of actions is a function $\strategy \colon (\Sigma \times \Actions)^*\Sigma \rightarrow \Actions$, which given a finite path $\finitepath = \ell_0 a_0 \, \ell_1 \, a_1 \dots a_{n-1} \, \ell_n \in (\Sigma \times \Actions)^*\Sigma$, yields the action $\strategy(\finitepath) \in A$ to be taken next. Notice that $\strategy$ only ``observes'' $\Lab(s)$, not the state $s$ itself. Therefore,  it can be applied to any $\Sigma$-\acron{MDP} $\MDP=(\St,  \init, \Sigma, \Lab, \Actions, \transitions)$,  inducing the Markov chain $\MDP^\strategy = (\St^\strategy, \init, \Sigma, \Lab, \Actions, \Pm^\strategy)$ defined as follows: $\St^\strategy = (\St \times \Actions)^*\times S$;  and for every state $\finitepath \in \St^\strategy$ of $\MDP^\strategy$ ending at a state $s \in \St$ of $\MDP$, the successor distribution is defined by $\Pm^\strategy(\finitepath, \finitepath \, a \, s') \coloneqq  \transitions(s, a, s')$ if $\strategy(\finitepath)=a$ and $0$ otherwise.


\section{The black-box testing problem}
\label{sec:blackboxtesting}

Fix a set $\Sigma$ of observations,  and let $\restact, \mathsf{c}$ (for \textbf{r}estart and \textbf{c}ontinue) be two actions. We associate to a $\Sigma$-observable Markov chain $\Mc = (\St, \init, \Sigma, \Lab, \Pm)$ a \emph{restart \acron{MDP}} $\restartMDP = (\St,  \init, \Lab, \{ \restact, \mathsf{c}\}, \transitions)$, where for every two states $s, s' \in \St$ the transition function is given by: $\transitions(s, \restact, s')=1$ if $s' = s_{in}$ and $0$ otherwise,  and $\transitions(s, \bm{c}, s')= \Pm(s, s')$.  Intuitively, at every state of $\restartMDP$ we have the choice between restarting the chain $\Mc$ or continuing. 

We consider black-box strategies on $\Sigma$ and $\{\restact, \mathsf{c}\}$.
Observe that if a run $\pi$ of $\restartMDPst{\strategy}$ contains finitely many occurrences of $\restact$, then the suffix of $\pi$ after the last occurrence of $\restact$ is a run of $\Mc$ (after dropping the occurrences of the continue action $c$). More precisely, if $\pi = \pi_0\pi'$, where $\pi'$ is the longest suffix of $\pi$ not containing $\restact$, then $\pi'=(\pi_0 \, s_{in}) \, (\pi_0 \, s_{in} \, \mathsf{c} \, s_1) \, (\pi_0 \, s_{in} \, \mathsf{c} \, s_1 \, \mathsf{c} \, s_2) \ldots$, where $s_{in} s_1 s_2 \ldots$ is a run of $\Mc$. The sequence of observations of $s_{in} s_1 s_2 \ldots$ is an infinite word over $\Sigma$, called the \emph{tail} of $\pi$; formally $\tail{\pi}\coloneqq  \Lab(s_{in}) \Lab(s_1) \Lab(s_2) \cdots$. 

\begin{definition}[Black-box testing strategies]
Let $\lang \subseteq \Sigma^\omega$ be an $\omega$-regular language. A \emph{black-box strategy} $\strategy$ on $\Sigma$ and  $\{\restact, \mathsf{c}\}$ is a \emph{testing strategy for $\lang$} if it satisfies the following property:  for every $\Sigma$-observable Markov chain $\Mc$,  if  $\Pr_\Mc(\lang) > 0$ then w.p.1 a run of $\restartMDPst{\strategy}$ has a finite number of restarts, and its tail belongs to $L$. The \emph{black-box testing problem} for $\lang$ consists of finding a black-box testing strategy for $\lang$. 
\end{definition}

We denote by $\restart(\rho)\in\N\cup\{\infty\}$ the number of appearances of the restart action $\restact$ in $\rho$. Intuitively, the language $\lang$ models the set of potential violations of a given liveness specifications. If we sample any finite-state $\Sigma$-observable Markov chain $\Mc$ according to a testing strategy for $\lang$, then w.p.1 we eventually stop restarting, and the tail of the run is a violation, or there exist no violations.

\subsection{Canonical black-box testing problems}

Using standard automata-theoretic techniques, the black-box testing problem for an arbitrary $\omega$-regular language $\lang$ can be reduced to the black-box testing problem for a canonical language.   
For this, we  need to introduce some standard notions of the theory of automata on infinite words. 

A deterministic Rabin automaton (\DRA{}) over an alphabet $\Sigma$ is a tuple $\dra = (\draS, \Sigma, \draTr, \draInit, \draAcc)$, where $\draS$ is a finite set of states, $\draTr \colon \draS \times \Sigma \to \draS$ is a transition function, $\draInit \in \draS$ is the initial state, and $\draAcc \subseteq 2^\draS \times 2^\draS$ is the acceptance condition. The elements of $\draAcc$ are called \emph{Rabin pairs}.  A word $w = a_0a_1a_2 \ldots \in \Sigma^\omega$ is accepted by $\dra$ if the unique run $\draInit q_1 q_2 \ldots$ of $\dra$ on $w$ satisfies the following condition: there exists a Rabin pair $(E,F) \in \draAcc$ such that $a_i \in E$ for infinitely many $i \in \N$ and $a_i \in F$ for finitely many $i \in \N$. It is well known that \DRA{}s recognize exactly the $\omega$-regular languages (see e.g.~\cite{Baier2008}). The \emph{Rabin index} of an $\omega$-regular language $L$ is the minimal number of Rabin pairs of the \DRA{}s that recognize $L$.

\newcommand{\infl}{\bm{e}}
\newcommand{\finl}{\bm{f}}

\begin{definition}
Let $k \geq 1$, and let $M_k = \{\infl_1, \ldots, \infl_k, \finl_1, \ldots, \finl_k\}$ be a set of \emph{markers}. The \emph{Rabin language} $\rabin_k \subseteq (2^{M_k})^\omega$ is the  language of all words $w = \alpha_0\alpha_1 \cdots  \in (2^{M_k}) ^\omega$ satisfying the following property: there exists $1 \leq j \leq k$ such that $\infl_j\in \alpha_i$ for infinitely many $i \geq 0$, and $\finl_j \in \alpha_i$ for at most finitely many $i \geq 0$. 
\end{definition}

We show that the black-box testing problem for languages of Rabin index $k$ can be reduced to the black-box testing problem for $\rabin_k$. 

\begin{restatable}{lemma}{rabinlemma}
\label{lem:rabin}
There is an algorithm that, given an $\omega$-regular language $\lang \subseteq \Sigma^\omega$ of index $k$ and given a testing strategy $\strategy_k$ for $\rabin_k$, effectively constructs a testing strategy $\strategy_\lang$ for $\lang$. 
\end{restatable}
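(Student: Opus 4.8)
The plan is to build, from the language $\lang$ of index $k$, a deterministic Rabin automaton and use it to translate each observation of $\Mc$ into an observation over the marker alphabet $2^{M_k}$, so that running $\strategy_k$ on the translated stream realises a testing strategy for $\lang$. Concretely, fix a \DRA{} $\dra = (\draS, \Sigma, \draTr, \draInit, \draAcc)$ with $\draAcc = \{(E_1,F_1), \ldots, (E_k,F_k)\}$ recognising $\lang$; such a $\dra$ exists and is computable because $\lang$ has Rabin index $k$. Define a relabelling $\Lab' \colon \draS \to 2^{M_k}$ by $\Lab'(q) = \{\infl_j : q \in E_j\} \cup \{\finl_j : q \in F_j\}$. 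The point of this definition is the identity: for any run $\draInit q_0 q_1 \cdots$ of $\dra$, the word $\Lab'(q_0)\Lab'(q_1)\cdots$ lies in $\rabin_k$ if and only if the run is accepting, since both conditions unfold to ``there is $j$ with $q_i \in E_j$ infinitely often and $q_i \in F_j$ finitely often''.

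Next I would define the translation $T \colon (\Sigma \times \{\restact,\mathsf c\})^*\Sigma \to (2^{M_k}\times\{\restact,\mathsf c\})^*2^{M_k}$ that scans a history left to right while maintaining the current \DRA{} state, resetting it to $\draInit$ immediately after every occurrence of $\restact$ and advancing it by $\draTr$ on each observation; each observation is then overwritten by $\Lab'$ of the resulting \DRA{} state, and the actions are copied verbatim. Crucially, $T$ depends only on the observations and actions of the history, not on the hidden states, so $\strategy_\lang \coloneqq \strategy_k \circ T$ is again a black-box strategy on $\Sigma$ and $\{\restact,\mathsf c\}$, and it is effectively constructible from $\strategy_k$ and $\dra$. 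This exhibits the algorithm; it remains to verify the testing property.

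For correctness I would pass to the product $\Mcdra$, viewed as a $2^{M_k}$-observable Markov chain on states $\St\times\draS$ with transition probabilities inherited from $\Mc$ in the first component and determined by $\draTr$ in the second, initial state $(\init,\draTr(\draInit,\Lab(\init)))$, and observation $(s,q)\mapsto\Lab'(q)$. Because $\dra$ is deterministic, the map $s_0s_1\cdots\mapsto(s_0,q_0)(s_1,q_1)\cdots$, with $q_i$ the \DRA{} state after reading $\Lab(s_0)\cdots\Lab(s_i)$, is a probability-preserving bijection between the runs of $\Mc$ and of $\Mcdra$, and by the identity above it carries $\{\rho : \Lab(\rho)\in\lang\}$ onto $\{\rho' : \Lab'(\rho')\in\rabin_k\}$; hence $\Pr_\Mc(\lang)>0$ iff $\Pr_{\Mcdra}(\rabin_k)>0$. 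The same bijection, extended to histories by copying the restart actions, identifies the Markov chain $\restartMDPst{\strategy_\lang}$ induced on $\Mc$ with the Markov chain $\restartMDPst{\strategy_k}$ induced on $\Mcdra$: the only source of randomness is the first component, common to both; the second component evolves deterministically and is reset to $\draInit$ by the restart action in exactly the way $T$ resets it; and by construction $\strategy_\lang$ and $\strategy_k$ prescribe the same action at corresponding histories. This identification preserves the number of restarts, and — since after the last restart the \DRA{} run of the tail starts afresh at $\draInit$ — it maps $\{\tail{\rho}\in\lang\}$ to $\{\tail{\rho'}\in\rabin_k\}$.

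Granting these correspondences, correctness is immediate: whenever $\Pr_\Mc(\lang)>0$ we have $\Pr_{\Mcdra}(\rabin_k)>0$, so by the testing property of $\strategy_k$ the run of $\restartMDPst{\strategy_k}$ on $\Mcdra$ has w.p.1 finitely many restarts and a tail in $\rabin_k$; transporting this along the measure isomorphism yields that the run of $\restartMDPst{\strategy_\lang}$ on $\Mc$ has w.p.1 finitely many restarts and a tail in $\lang$. I expect the main obstacle to be the careful bookkeeping of this last isomorphism — in particular checking that the restart action resets the \DRA{} component consistently in $T$ and in $\Mcdra$, so that the two induced probability spaces are genuinely isomorphic and the events ``finitely many restarts'' and ``tail in the language'' are matched exactly; the automata-theoretic identity and the existence of the $k$-pair \DRA{} are standard.
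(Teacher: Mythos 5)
Your proposal is correct and follows essentially the same route as the paper's own proof: fix a $k$-pair \DRA{} for $\lang$, translate the observation history through it and feed the relabelled history to $\strategy_k$, then prove correctness via a probability-preserving bijection between (restart-)runs of $\Mc$ and of the product chain $\Mcdra$. If anything, your translation $T$ is slightly more careful than the paper's sketch, since you explicitly reset the \DRA{} state to $\draInit$ at each occurrence of $\restact$, which is exactly the bookkeeping needed for the induced restart chains to be isomorphic.
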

\begin{proof}
(Sketch, full proof in the Appendix.) Let $\dra = (\draS, \Sigma, \draTr, \draInit, \draAcc)$ be a \DRA{} recognizing $\lang \subseteq \Sigma^\omega$ with accepting condition $\draAcc=\{(E_1, F_1), \ldots, (E_k, F_k)\}$, i.e., $\draAcc$  contains $k$ Rabin pairs. Let $\strategy_k$ be a black-box strategy for the Rabin language $\rabin_k$.  We construct a black-box strategy $\strategy_\lang$ for $\lang$. 

Let $w = \ell_1 a_1 \ell_2 \cdots \ell_{n-1} a_n \ell_n \in (\Sigma \times \{\restact, \mathsf{c}\})^* \Sigma$. We define the action $\strategy_\lang(w)$ as follows. Let $\draInit q_1 \ldots q_n$ be the unique run of $\dra$ on the word $\ell_1\ell_2 \ldots \ell_n \in \Sigma^*$. Define $v = \ell_1' a_1 \ell_2' \cdots \ell_{n-1}' a_n \ell_n' \in (2^{M_k} \times \{\restact, \mathsf{c}\})^* 2^{M_k}$ as the word given by: $\infl_j \in \ell_i'$ if{}f $q_i \in E_j$, and  $\finl_j \in \ell_i'$ if{}f $q_i \in F_j$. (Intuitively, we mark with $\infl_j$ the positions in the run at which the \DRA{} visits $E_j$, and with $\finl_j$ the positions at which the \DRA{} visits $F_j$.) We set $\strategy_\lang(w) \coloneqq  \strategy_k(v)$. We show in the Appendix that $\strategy_\lang$ is a black-box strategy for $\lang$. 
\end{proof}

\section{Black-box strategies for Rabin languages}
\label{sec:strategy}
\newcommand{\secondhalf}[1]{\textit{SecondHalf}({#1})}
\newcommand{\last}[1]{\textit{last}(#1)}

We describe a family of testing strategies for the Rabin languages $\{\rabin_k \mid k \geq 1 \}$. 
 In Section \ref{subsec:strategy} we describe our strategy in detail. In Section \ref{subsec:progress} we introduce the progress radius and the
 progress probability, two parameters of a chain needed to prove correctness and necessary for quantitative analysis in Section \ref{sec:quantitative}. In Section \ref{subsec:correctness} we formally prove that our strategy works.

\subsection{The strategy}
\label{subsec:strategy}
Let $\Mc$ be a Markov chain with observations in $2^{M_k}$, and let $\mypath = s_0 s_1 s_2 \cdots s_m$ be  a finite path of $\Mc$.  The \emph{length} of $\mypath$ is $m$, and its last state, denoted $\last{\pi}$, is $s_m$. 
The \emph{second half} of $\mypath$ is the path $\secondhalf{\pi}\coloneqq  s_{\lceil m / 2 \rceil} \ldots s_m$.
The \emph{concatenation} of $\mypath$ and a finite path $\rho= r_0 r_1 \cdots r_l$ of $\Mc$ such that $s_m= r_0$ is the path $\mypath \odot \rho \coloneqq  s_0 s_1 s_2 \cdots s_m r_1 \cdots r_l$. A path
 $\mypath$ is \emph{$i$-good} if it has length $0$ or there are markers $\infl_i, \finl_i\in M_k$ such that some state $s$ of $\mypath$ satisfies $\infl_i \in \Lab(s)$ and no state $s$ of $\mypath$ satisfies $\finl_i \in \Lab(s)$. 
 Further, $\mypath$ is \emph{good} if it is $i$-good for some $1 \leq i \leq k$.

The strategy $\mathfrak{S}[f]$, described in  Figure \ref{fig:strat}, is parametrized by a function $f \colon \N \rightarrow \N$. The only requirement on $f$ is $\limsup_{n\to \infty} f(n)=\infty$.
\begin{figure}[h]
\begin{center}
\begin{minipage}{11.0cm}
\begin{algorithmic}
	\State{$n\coloneqq 0$} \Comment{ number of restarts }
	\While{true}
	\State{$\mypath \gets \init$} \Comment{ initial state of the chain}
	\While{$\secondhalf{\mypath}$ is good}
	\State{sample path $\rho$ from state $\last{\pi}$}
	\State{of length $2 \cdot f(n)$}  \Comment{even length for convenience}
	\State{$\mypath \gets \mypath \odot \rho$}
	\EndWhile \Comment{ restart}
	\State{$n \gets n+1$}
	\EndWhile
\end{algorithmic}
\end{minipage}
\end{center}
\caption{Strategy $\mathfrak{S}[f]$ for the Rabin language $\rabin_k$ and a function $f \colon \N \rightarrow \N$.}\label{alg:cstrategy}
\label{fig:strat}
\end{figure}
In words, after the $n$-th restart the strategy keeps sampling in blocks of $2 \cdot f(n)$ steps until
the \emph{second half} of the complete path sampled so far is bad, in which case it restarts. For example, 
after the $n$-th restart the strategy samples a block $\mypath_0 = \mypath_{01} \odot \mypath_{02}$, where $|\mypath_{01}|=|\mypath_{02}|= f(n)$, and checks whether $\mypath_{02}$ is good; if not, it restarts, otherwise it samples a block $\mypath_1 = \mypath_{11} \odot \mypath_{12}$ starting from $\last{\mypath_{02}}$, and checks whether $\mypath_{11} \odot \mypath_{12}$ is good; if not, it restarts; if so it samples a block $\mypath_2 = \mypath_{21} \odot \mypath_{22}$ starting from $\last{\mypath_{12}}$, and checks whether $\mypath_{12} \odot \mypath_{21} \odot \mypath_{22}$ is good, etc. 
Intuitively, the growth of $f$ controls how the strategy prioritizes deep runs into the chain over quick restarts while the number of restarts increases.

In the rest of the paper we prove that our strategy is correct, and obtain optimal upper bounds on the number of steps to the last reset. These bounds are given in terms of two parameters of the chain: the progress radius and the progress probability. We introduce the parameters in section \ref{subsec:progress}.

\subsection{Progress radius and progress probability}
\begin{figure}
	\resizebox{1.09\linewidth}{!}{
	{
		\fontsize{7pt}{9pt}\selectfont
		\def\svgwidth{5.333in}
		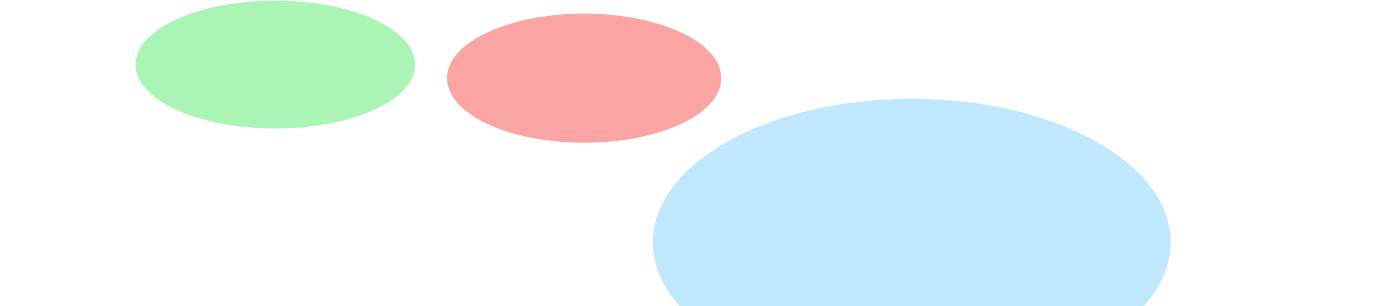
	}	
	}
	\caption{\textit{Left:} Intuitively, $r_\gamma$ and $r_\beta$ denote an upper bound of how hard it is to reach a \BSCC{}. \textit{Right:} $R_\gamma$ and $R_\beta$ measure how hard it is to reach a state with label $\infl_i$/$\finl_i$ inside the \BSCC{}s}
	\label{fig:rsandps}
\end{figure}
\label{subsec:progress}
We define the notion of progress radius and progress probability for a Markov chain $\Mc$  with $2^{M_k}$ as set of observations and such that $\Pr(\rabin_k) > 0$. 
Intuitively, the progress radius  is the smallest number of steps such that, for any state of the chain, conducting only this number of steps one can 
``make progress'' toward producing a good run or a bad run. The progress probability gives a lower bound for the probability of the paths that make progress. 

We define the notions only for the case $k=1$, which already contains all the important features. The definition for arbitrary $k$ is more technical, and is given in Appendix \ref{appendix:progress}. 

\medskip\noindent\textbf{Good runs and good \BSCC{}s.} We extend the definition of good paths to good runs and good \BSCC{}s of a Markov chain.
A run $\run=s_0s_1s_2\dots$ is \emph{good} if $\infl_1$ appears infinitely often in $\run$ and $\finl_1$ finitely often, and \emph{bad} otherwise. 
So a run $\run$ is good if{}f there exists a decomposition of $\run$  into an infinite concatenation $\run \coloneqq  \mypath_0 \odot \mypath_1 \odot \mypath_2 \odot \cdots$ of non-empty paths such that $\mypath_1, \mypath_2, \ldots$ are good. We let $\Pg$ denote the probability of the good runs of $\Mc$.

A \BSCC{} of $\Mc$ is \emph{good} if it contains at least one state labeled by $\infl_1$ and no state labeled by $\finl_1$, and bad otherwise. 
It is well-known that the runs of any finite-state Markov chain reach a \BSCC{} and visit all its states infinitely often w.p.1 \cite[Thm. 10.27]{Baier2008}. 
It follows that good (resp. bad) runs eventually reach a good (resp. bad) \BSCC{} w.p.1. 

\medskip\noindent\textbf{Progress radius.} Intuitively, the progress radius $\Rm$ is the smallest number of steps such that, for any state $s$, by conducting $\Rm$ steps one can 
``make progress'' toward producing a good run\,---\,by reaching a good \BSCC{} or, if already in one, by reaching a state with observation $\infl_1$\,---\,or a bad run.

\begin{definition}[Good-reachability and good-witness radii]
Let $B_\gamma$ be the set of states of $\Mc$ that belong to good \BSCC{}s and let $\St_\gamma$ be the set of states from which it is possible to reach $B_\gamma$,
and let $s \in \St_\gamma$. A non-empty path $\pi$ starting at $s$ is a \emph{good progress path} if 
\begin{itemize}
\item $s \in \St_\gamma \setminus B_\gamma$, and $\pi$ ends at a state of $B_\gamma$; or
\item $s \in B_\gamma$, and $\pi$ ends at a state with observation $\infl_1$.  
\end{itemize}
The \emph{good-reachability radius} $r_\gamma$ is the maximum, taken over every $s \in \St_\gamma \setminus B_\gamma$, of the length of a shortest progress path for $s$.
The \emph{good-witness radius} $R_\gamma$ is the same maximum, but taken over every $s \in B_\gamma$.
\end{definition}

The bad-reachability and bad-witness radii, denoted $r_\beta$ and $R_\beta$ are defined analogously. Only the notion of progress path of a state$s \in B_\beta$ 
needs to be adapted. Loosely speaking, a bad \BSCC{} either contains no states with observation $\infl_1$, or it contains some
state with observation $\finl_1$.  Accordingly, if no state of the \BSCC{} of $s$ has observation $\infl_1$, then any non-empty path starting at $s$ is a progress path,
and otherwise a progress path of $s$ is a non-empty path starting at $s$ and  ending at a state with observation $\finl_1$.
We illustrate the definition of the reachability and witness radii in Figure \ref{fig:rsandps}.
We leave $r_\beta$, $R_\beta$, $p_\beta$, and $P_\beta$ undefined if the chain does not contain a bad \BSCC{}, and hence runs are good w.p.1.

\begin{definition}[Progress radius]
The \emph{progress radius} $\Rm$ of $\Mc$ is the maximum of $r_\gamma$, $R_\gamma$, $r_\beta$, and $R_\beta$. 
\end{definition}

\medskip\noindent\textbf{Progress probability.} From any state of the Markov chain it is possible to ``make progress'' by executing a progress path of length $\Rm$. 
However, the probability of such paths varies from state to state. Intuitively, the progress probability gives a lower bound on the probability
of making progress. 

\begin{definition}
Let $B_\gamma$ be the set of states of $\Mc$ that belong to good \BSCC{}s, let $\St_\gamma$ be the set of states from which it is possible to reach $B_\gamma$,
and let $s \in \St_\gamma$. The \emph{good-reachability probability} $p_\gamma$ is the minimum, taken over every $s \in \St_\gamma\setminus B_\gamma$, of the probability 
that a path with length $r_\gamma$ starting at $s$ contains a good progress path. The \emph{good-witness probability} $P_\gamma$ is the same mininum, but taken over every $s \in B_\gamma$ with paths of length $R_\gamma$.
The corresponding bad probabilities are defined analogously. The \emph{progress probability} $\Pmax$ is the  minimum of $p_\gamma, P_\gamma, p_\beta, P_\beta$.
\end{definition}

\newcommand{\NB}{\textit{NB}}
\subsection{Correctness proof}
\label{subsec:correctness}
We prove that the strategy $\mathfrak{S}[f]$ of section \ref{subsec:strategy} is a valid testing strategy $\mathfrak{S}[f]$ for arbitrary Markov chains $\mathcal{M}$.
First, we will give an upper bound on the probability that $\mathfrak{S}[f]$ restarts ``incorrectly'', i.e.~at a state $s\in S_\gamma$ from which a good \BSCC{} could still be reached.
\begin{restatable}{lemma}{ProbRestartGood}
\label{lemma:PRestartGood}
Let $\Mc$ be a Markov chain, and let $\restartMDPst{\mathfrak{S}[f]}$ be its associated Markov chain with 
$\mathfrak{S}[f]$ as restart strategy.  Let $\NB_{n}$ be the set of paths of $\restartMDPst{\mathfrak{S}[f]}$ that have at least $n-1$ restarts and only visit states in $S_\gamma$ after the $(n-1)$-th restart.  We have:
\[
\Pr[\restart \ge n \mid \NB_n]\le  3(1-\Pmax )^{\lfloor f(n)/\Rm \rfloor-1}
\]
\end{restatable}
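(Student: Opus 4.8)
The plan is to work throughout under the conditional measure $\Pr[\,\cdot \mid \NB_n]$. Under $\NB_n$ every state visited after the $(n-1)$-th restart lies in $S_\gamma$, so the run never enters a bad \BSCC{}, and any \BSCC{} it reaches is good and, being bottom, absorbing. Write $N := f(n)$ for the half-block length used in this round (blocks have length $2N$) and $L := \lfloor N/\Rm\rfloor$. The inner loop first tests the trailing half of the length-$0$ path $\init$, which is good, so at least one block is always sampled; the $n$-th restart is triggered precisely when, after completing some number $m \ge 1$ of blocks, the trailing half $\secondhalf{\pi}$ of the accumulated path $\pi$ — which spans the index window $[Nm,2Nm]$ — is bad. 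Hence I must upper-bound the probability that some such window is bad.

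I would isolate a sufficient condition for no restart. If the run reaches $B_\gamma$ within the first $N$ steps, then every state from index $N$ onward lies in $B_\gamma$; since a good \BSCC{} contains no $\finl_1$-state, every window $[Nm,2Nm]$ (which starts at $Nm \ge N$) is automatically $\finl_1$-free, so it can be bad only by missing $\infl_1$. A union bound over the two failure modes therefore gives
\[
\Pr[\restart \ge n \mid \NB_n] \;\le\; \Pr[\text{$B_\gamma$ not reached within $N$ steps}] \;+\; \sum_{m\ge 1}\Pr[\text{$[Nm,2Nm]$ contains no $\infl_1$-state}],
\]
where the summands are taken also on the event that $B_\gamma$ is reached within $N$ steps.

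For the first term I would partition the first $N$ transitions into $L$ disjoint segments of length $\Rm$. At the start of each segment the current state is in $S_\gamma$; by the good-reachability radius/probability, a length-$\Rm$ path from a state of $S_\gamma \setminus B_\gamma$ reaches $B_\gamma$ with probability at least $p_\gamma \ge \Pmax$ (its length-$r_\gamma$ prefix already reaches $B_\gamma$, and $B_\gamma$ is absorbing). These progress paths stay in $S_\gamma$, so the event ``reach $B_\gamma$'' is contained in ``stay in $S_\gamma$''; consequently conditioning on $\NB_n$ does not lower this probability, and the strong Markov property dominates the number of failing segments by a geometric variable, yielding a bound $(1-\Pmax)^{L}$. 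For the $m$-th summand I would similarly note that $[Nm,2Nm]$ has length $Nm$, hence splits into $\lfloor Nm/\Rm\rfloor \ge mL$ disjoint $\Rm$-segments, each of which sees an $\infl_1$-state with probability at least $P_\gamma \ge \Pmax$ by the good-witness radius/probability; here the conditioning is vacuous since $B_\gamma \subseteq S_\gamma$ is absorbing. This gives $(1-\Pmax)^{mL}$.

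Finally I would assemble the constants. Writing $q := 1-\Pmax$, the second term is $\sum_{m\ge 1} q^{mL} = q^{L}/(1-q^{L})$. If $3q^{L-1} \ge 1$ the claim is trivial, since the left-hand side is a probability; otherwise $q^{L} < \tfrac{1}{3}$, so $q^{L}/(1-q^{L}) \le \tfrac{3}{2}q^{L} \le 2q^{L-1}$, while the reachability term satisfies $q^{L} \le q^{L-1}$, and summing gives $3q^{L-1} = 3(1-\Pmax)^{\lfloor f(n)/\Rm\rfloor - 1}$. The main obstacle I anticipate is the joint treatment of all $m$ under the conditioning: because the test is on the trailing half of the \emph{whole} path, the windows grow and overlap, so a single-block analysis does not suffice and one is forced into the union bound over $m$; and one must check that the per-segment progress probabilities survive the taboo conditioning on staying in $S_\gamma$, which works exactly because every good progress path remains in $S_\gamma$ and $B_\gamma$ is absorbing.
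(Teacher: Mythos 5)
Your proof is correct and yields exactly the claimed bound, but it takes a genuinely different route from the paper's. The paper partitions $\{\restart\ge n\}$ into three mutually exclusive cases: (a) no good \BSCC{} is reached within the first block; (b) the restart is triggered at the very first check, right around step $2f(n)$; (c) the restart is triggered at some later check. Case (c) is handled by a minimality argument: taking the \emph{first} window (in units of $R_\gamma$) that misses a witness state forces a witness visit just before it, so this case contributes $\sum_k P_\gamma(1-P_\gamma)^k$, which telescopes to a term of the same form as (b). You instead take a union bound over \emph{all} checks $m\ge 1$, exploiting that the $m$-th checked window $[f(n)m,\,2f(n)m]$ has length $f(n)\,m$ and hence, once the run is absorbed in a good \BSCC{}, misses the witness label with probability at most $q^{mL}$, where $q=1-\Pmax$ and $L=\lfloor f(n)/\Rm\rfloor$; the geometric series $q^L/(1-q^L)$, combined with your dichotomy on whether $3q^{L-1}\ge 1$, then recovers the constant $3$. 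The growing window length is what makes your union bound summable (a uniform per-window bound of $q^L$ would not be), and it spares you from identifying the first bad window; conversely, the paper's minimality trick yields the constant directly, with no dichotomy and no convergence concern. You are also more explicit than the paper about why conditioning on $\NB_n$ is harmless: every good progress path towards $B_\gamma$ stays inside $S_\gamma$, so the per-segment success event is contained in the conditioning event and its conditional probability can only increase, while inside $B_\gamma$ the conditioning is vacuous. One caveat: your write-up treats a single Rabin pair, matching the simplified definitions of Section \ref{subsec:progress}, whereas the lemma and the appendix proof cover general $k$; the extension is mechanical\,---\,once the run is absorbed in an $i$-good \BSCC{}, which contains no $\finl_i$-state, a trailing half is bad only if it misses $\infl_i$, and the general definitions of $R_\gamma$ and $P_\gamma$ bound precisely that event\,---\,but it should be stated.
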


The technical proof of this lemma is in the Appendix. We give here the proof for a special case that illustrates most ideas.
Consider the Markov chain with labels in $2^{ \{\infl_1, \finl_1\} }$ at the top of Figure \ref{fig:examplemcrun}. 
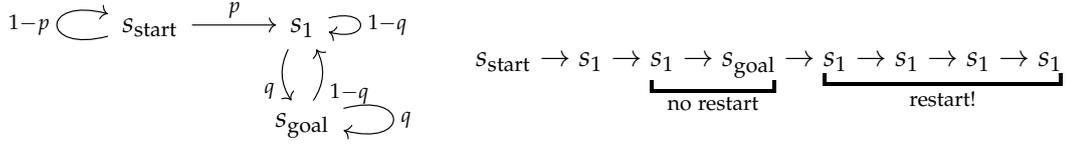
\begin{figure}
\begin{minipage}{6cm}
\begin{equation*}
	\begin{tikzcd}
  s_{\text{start}}\ar[loop left, "1-p"]\ar[r,"p"]& s_1\ar[loop right,"1-q"]\ar[d,"q"', bend right]\\
  & s_\text{goal}\ar[loop right,"q"]\ar[u,bend right,"1-q"']
	\end{tikzcd}
\end{equation*}
\end{minipage}
\quad
\begin{minipage}{6cm}
\begin{equation*}
 		s_{\text{start}}\rightarrow s_{1}\rightarrow \underbracket{s_1 \rightarrow s_\text{goal}}_\text{no restart} \rightarrow \underbracket{s_1 \rightarrow s_1 \rightarrow s_1 \rightarrow s_1}_\text{restart!}
 	\end{equation*}
 	\end{minipage}
 \caption{A Markov chain (left), and (a finite prefix of) one of its runs for $n=2$ and $f(n)=n$ (right). After $2$ steps, the run has reached a good \BSCC{} $\mathcal{B}$. After $4$ steps, $\strategy$ checks whether to restart, but decides against it because of $s_\text{goal}$ in step $4$. After 8 steps it checks again, restarting this time. This restart is covered by the third case of the case distinction, with $k=4$. The run must have visited $s_\text{goal}$ in step 4, because otherwise the minimal $k$ would be $3$ or less.}
 \label{fig:examplemcrun}
 \end{figure}
The labeling function is
$\Lab(s_\text{goal}) = \{ \infl_1 \}$ and $\Lab(s) = \emptyset$ for all other states, and $\Lab(\run) \in \rabin_1$ if{}f $\rho$ visits
$s_\text{goal}$ infinitely often.
The set $S_\gamma$ contains all states because $s_\text{goal}$ is reachable from every state. The only \BSCC{} is $\mathcal{B} = \{s_1,s_\text{goal}\}$, and it is a good \BSCC{}. 
From the definitions of the parameters we obtain $r_\gamma = R_\gamma = 1$, $p_\gamma = p$ and $P_\gamma = q$. Further, since there are no bad \BSCC{}s, $r_\beta$ and $R_\beta$ are undefined, and so $\Rm  = 1$. 
So for this Markov chain Lemma \ref{lemma:PRestartGood} states $\Pr[\restart \ge n \mid \NB_n]\le  3(1-\Pmax )^{f(n)-1}$. Let us see why this is the case.

Let $\run$ be a run of $\restartMDPst{\mathfrak{S}[f]}$ such that $\restart(\run) \geq n$, i.e., $\run$ has at least $n$ restarts. Since $S_\gamma$ contains all states, we have $\run \in \NB_n$ if{}f $\restart(\run) \geq n-1$. We consider three cases. In the definition of the cases we start counting steps immediately after the $(n-1)$-th restart, and denote by $\run[a, b]$ the fragment of $\run$ that starts immediately before step $a$, and ends immediately after step $b$.
\begin{itemize}
	\item[(a)] After $f(n)$ steps, $\run$ has not yet reached $\mathcal{B}$. \\
	Then $\run$ has stayed in $s_\text{start}$ for $f(n)$ consecutive steps, which, since $p=p_\gamma$, happens with probability at most $(1-p_\gamma)^{f(n)}$.
	\item[(b)] After $f(n)$ steps, $\run$ has already reached $\mathcal{B}$. Further, the $n$-th restart happens 
		immediately after step $2 f(n)$.\\
	In this case, by the definition of the strategy, $\run$ does not visit $s_\text{goal}$ during the interval $\run[f(n)+1, 2f(n)]$ (the second half of $[0, 2f(n)]$). So  $\run$ stays in $s_1$ during the interval $\run[f(n)+1, 2f(n)]$ which, since $\run$ has already reached $\mathcal{B}$ by step $f(n)$, occurs with probability $(1 -P_\gamma)^{f(n)}$.
	\item[(c)] After $f(n)$ steps, $\run$ has already reached $\mathcal{B}$. Further, the $n$-th restart does not happen before step $2f(n)+1$. \\ 
	Since the $n$-th restart happens at some point, and not before step $2f(n)+1$, by the definition of the strategy there is a smallest 
	number $k \geq f(n)$ such that $\run$ does not visit $s_\text{goal}$ during the interval $\run[k+1, 2k]$. Because we assume that the $n$-th restart happens after step $2f(n)$, we even have $k>f(n)$. 
	By the minimality of $k$, the run $\run$ does visit $s_\text{goal}$ during the interval $\run[k, 2k-2]$. 
	So $\run$ moves to $s_{goal}$ at step $k$, and then stays in $s_1$ for $k$ steps. The probability of the runs that eventually move to $s_{goal}$ and then move to stay in $s_1$ for $k$ steps is $P_\gamma(1-P_\gamma)^k$.
\end{itemize}
Figure \ref{fig:examplemcrun} shows at the bottom an example of a run, and how the stratgy handles it.
Since (a)-(c) are mutually exclusive events,  $\Pr[\restart \ge n \mid \NB_n]$ is bounded by the sum of their probabilities, where in case (c) we sum over all possible values of $k$. This yields:
\begin{equation*}
Pr[\restart \ge n \mid \NB_n] \le \;  (1-p_\gamma)^{f(n)}+(1-P_\gamma)^{f(n)}
+ \sum_{k=f(n)+1}^\infty P_\gamma(1-P_\gamma)^k\le \;  3(1-\Pmax )^{f(n)}.
\end{equation*}
The proof for arbitrary Markov chains given in the Appendix has the same structure, and in particular the same split into three different events. 
Applying Lemma \ref{lemma:PRestartGood} we now easily obtain (see the Appendix for a detailed proof) an upper bound for the probability to restart an $n$-th time. Note that this bound captures the ``correct'' as well as the ``incorrect'' restarts:

\begin{restatable}[Restarting probability]{lemma}{ProbRestart}
	\label{lemma:restartingProb}
Let $\Mc$ be a Markov chain, and let $\restartMDPst{\mathfrak{S}[f]}$ be its associated Markov chain with 
$\mathfrak{S}[f]$ as restart strategy. The probability that a run restarts again after $n-1$ restarts satisfies:
	\[
	\Pr [\restart\ge n\mid \restart\ge n-1]\le 1- \Pg \left( 1 - 3(1-\Pmax )^{\lfloor f(n)/\Rm \rfloor-1} \right) 
	\]
\end{restatable}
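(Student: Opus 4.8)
The plan is to bound $\Pr[\restart \ge n \mid \restart \ge n-1]$ by conditioning on the suffix of the run produced after the $(n-1)$-th restart, and splitting according to whether that suffix is a \emph{good} run. By the strong Markov property at the $(n-1)$-th restart time (a stopping time that is a.s.\ finite on the event $\restart \ge n-1$, at which the chain is reset to $\init$), this suffix is an independent fresh run of $\Mc$ started at $\init$. Hence the event that the suffix is good has conditional probability exactly $\Pg$, and on its complement I simply bound the restart probability by $1$. The entire quantitative work is then concentrated in the good case, where Lemma~\ref{lemma:PRestartGood} applies.

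The first key step is to identify, up to a null set and conditioned on $\restart \ge n-1$, the event $\NB_n$ (the suffix stays in $S_\gamma$ forever) with the event that the suffix is a good run; this yields $\Pr[\NB_n \mid \restart \ge n-1] = \Pg$. For one inclusion, a good run reaches $B_\gamma$ w.p.1, and every state lying on a path that eventually reaches $B_\gamma$ belongs to $S_\gamma$, so a good run never leaves $S_\gamma$. For the converse, the states of a bad \BSCC{} cannot reach $B_\gamma$ and hence lie outside $S_\gamma$; since w.p.1 every run enters some \BSCC{} and visits all its states infinitely often, a run confined to $S_\gamma$ must enter a good \BSCC{}, whence it visits an $\infl_1$-state infinitely often and never an $\finl_1$-state, i.e.\ it is good.

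With $\NB_n$ identified as the good event, I would write, conditioning throughout on $\restart \ge n-1$,
\[
\Pr[\restart \ge n \mid \restart \ge n-1] \le \Pr[\restart \ge n,\ \NB_n \mid \restart \ge n-1] + \Pr[\neg\NB_n \mid \restart \ge n-1],
\]
where the second summand equals $1 - \Pg$ by the previous step. For the first summand, since $\NB_n$ entails $\restart \ge n-1$ the extra conditioning is redundant, and Lemma~\ref{lemma:PRestartGood} gives
\[
\Pr[\restart \ge n,\ \NB_n \mid \restart \ge n-1] = \Pr[\restart \ge n \mid \NB_n]\cdot \frac{\Pr[\NB_n]}{\Pr[\restart \ge n-1]} \le 3(1-\Pmax)^{\lfloor f(n)/\Rm \rfloor - 1}\cdot \Pg,
\]
using $\Pr[\NB_n]/\Pr[\restart \ge n-1] = \Pr[\NB_n \mid \restart \ge n-1] = \Pg$. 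Adding the two bounds yields exactly $1 - \Pg\bigl(1 - 3(1-\Pmax)^{\lfloor f(n)/\Rm \rfloor - 1}\bigr)$, as required; the case $\Pr[\restart \ge n-1]=0$ is vacuous.

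The only delicate point—the main obstacle—is the identification $\Pr[\NB_n \mid \restart \ge n-1] = \Pg$, which rests on two ingredients that must be made rigorous: the renewal property ensuring that the post-restart suffix is a fresh copy of $\Mc$ independent of the (measurable) past restart count $n-1$, and the purely structural equivalence between ``remains in $S_\gamma$'' and ``is good'' for almost every run. Everything else is an elementary manipulation of conditional probabilities combined with a direct appeal to Lemma~\ref{lemma:PRestartGood}; in particular, it is crucial that one obtains the factor $\Pg$ (rather than the weaker factor $1$) in the good-case bound, which is precisely what the identity $\Pr[\NB_n \mid \restart \ge n-1] = \Pg$ provides and what makes the final bound tight.
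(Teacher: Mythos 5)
Your overall skeleton is the same as the paper's: split on $\NB_n$ versus its complement, apply Lemma~\ref{lemma:PRestartGood} on $\NB_n$, bound the restart probability by $1$ on the complement, and relate $\Pr[\NB_n \mid \restart\ge n-1]$ to $\Pg$. However, the step you single out as the crux, the identity $\Pr[\NB_n \mid \restart \ge n-1] = \Pg$, is false, and the paper never claims it. The event $\NB_n$ constrains only the states that $\restartMDPst{\mathfrak{S}[f]}$ \emph{actually visits} after the $(n-1)$-th restart; with positive probability the $n$-th restart cuts this segment short (that probability is exactly what Lemma~\ref{lemma:PRestartGood} quantifies), and a \emph{finite} path confined to $S_\gamma$ says nothing about goodness of any infinite run of $\Mc$. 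Your ``converse inclusion'' (confined to $S_\gamma$ implies good) is a correct statement about infinite runs of $\Mc$, but you apply it to an object that is not one. Concretely, take a chain whose initial state starts a deterministic path of $m \gg 2f(n)$ states inside $S_\gamma$, ending in a state that moves with probability $1/2$ to a good \BSCC{} labelled $\infl_1$ and with probability $1/2$ to a bad one. After the first block of $2f(n)$ steps the second half contains no $\infl_1$, so the strategy restarts having seen only $S_\gamma$-states; hence $\Pr[\NB_n \mid \restart\ge n-1] = 1$ while $\Pg = 1/2$. Only the one-sided bound $\Pr[\NB_n \mid \restart\ge n-1] \ge \Pg$ holds (good runs a.s.\ stay in $S_\gamma$ — your first inclusion), and that is all the paper uses. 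Your renewal-property argument does give exact probability $\Pg$, but for a \emph{different} event — ``the underlying $\Mc$-run sampled after the $(n-1)$-th restart is good'' — for which Lemma~\ref{lemma:PRestartGood} is not stated.

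This gap is not cosmetic, because your way of finishing depends on it: you bound the two summands separately, using $\Pr[\restart\ge n,\ \NB_n \mid \restart\ge n-1] \le \alpha\,\Pg$ with $\alpha \coloneqq 3(1-\Pmax)^{\lfloor f(n)/\Rm\rfloor -1}$, and this requires an \emph{upper} bound $\Pr[\NB_n\mid\restart\ge n-1]\le \Pg$, which you do not have (indeed the true value can exceed $\Pg$, as above). The repair — and the paper's actual argument in the appendix — is to combine the terms first: the decomposition gives $\Pr[\restart\ge n\mid\restart\ge n-1] \le 1-(1-\alpha)\Pr[\NB_n\mid\restart\ge n-1]$, which is non-increasing in $\Pr[\NB_n\mid\restart\ge n-1]$ whenever $\alpha\le 1$, so the inequality $\ge\Pg$ suffices; and when $\alpha>1$ the claimed bound exceeds $1$ and holds trivially. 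With that modification the final bound is recovered, so your conclusion is correct, but as written the central justification does not go through.
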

\begin{proof}
	Let $\NB_{n}$ be the set of paths of $\restartMDPst{\mathfrak{S}[f]}$ that have at least $n-1$ restarts and only visit states in $S_\gamma$ after the $(n-1)$-th restart and $\overline{\NB}_n$ its complement. We have 
	\begin{align*}
		\Pr [\restart\ge n\mid \restart\ge n-1] = & \Pr[\restart \ge n \mid \NB_{n}]  \cdot \Pr[\NB_{n} \mid \restart \ge n-1] + \\
		& \Pr[\restart \ge n \mid \overline{\NB}_n] \cdot \Pr[\overline{\NB}_n \mid \restart \ge n-1].
	\end{align*}
	Applying  Lemma \ref{lemma:PRestartGood} and $\Pr[\restart \ge n \mid \overline{\NB}_n]\le 1$, we get
\begin{align*}
                    \Pr [\restart\ge n\mid \restart\ge n-1]  \leq &\;  \big( 3(1-\Pmax )^{\lfloor f(n)/\Rm \rfloor -1} \big)  \Pr[\NB_{n} \mid \restart \ge n-1]\\
                     &+  \Pr[\overline{\NB}_n \mid \restart \ge n-1]	  
\end{align*}
W.p.1, good runs of $\Mc$ only visit states of $S_\gamma$ and so $\Pr[\NB_{n} \mid \restart \ge n-1] \ge \Pg$ and thus $\Pr[\overline{\NB}_{n} \mid \restart \ge n-1] \le 1-\Pg$, which completes the proof.
\end{proof}

Finally, we show that $\mathfrak{S}[f]$ is a correct testing strategy. Further, we show that the condition $\limsup_{n\to \infty} f(n)=\infty$ is not ony sufficient, but also necessary.
The previous lemma gives an upper bound on the probability for a restart that, for increasing $f(n)$, drops below $1$. If $f(n)$ is above that threshold for infinitely many $n$, 
it suffices to show that the strategy $\mathfrak{S}[f]$ restarts every bad run:

\begin{theorem}
	\label{theorem:testingstrategies}
$\mathfrak{S}[f]$  is a testing strategy for the Rabin language $\rabin_k$ if{}f the function $f$ satisfies $\limsup_{n\to \infty} f(n)=\infty$. 
\end{theorem}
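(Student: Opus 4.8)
The plan is to prove the two implications separately, handling $\limsup_{n\to\infty}f(n)=\infty \Rightarrow$ testing strategy first and the converse afterwards. For the forward direction I would fix an arbitrary chain $\Mc$ with $\Pr_\Mc(\rabin_k)>0$ and verify the two clauses in the definition of a testing strategy: finitely many restarts w.p.1, and a tail in $\rabin_k$ w.p.1. Note first that $\Pr_\Mc(\rabin_k)>0$ forces a good \BSCC{} to exist, so that $\Pg>0$, $\Pmax>0$ and $\Rm<\infty$. For the finiteness clause I would invoke Lemma \ref{lemma:restartingProb}: the hypothesis $\limsup f=\infty$ produces infinitely many indices $n$ with $f(n)$ large enough that $3(1-\Pmax)^{\lfloor f(n)/\Rm\rfloor-1}\le \tfrac12$, and for each such $n$ the lemma gives $\Pr[\restart\ge n\mid\restart\ge n-1]\le 1-\Pg/2<1$. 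Since $\Pr[\restart\ge N]=\prod_{n=1}^{N}\Pr[\restart\ge n\mid\restart\ge n-1]$ and every factor is at most $1$, retaining only the infinitely many factors bounded by $1-\Pg/2$ forces the partial products to $0$, whence $\Pr[\restart=\infty]=0$.

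For the second clause I would isolate the auxiliary statement that w.p.1 the inner loop of $\mathfrak{S}[f]$ terminates (a restart occurs) on every \emph{bad} run of $\Mc$. Granting this, I would decompose over which restart is the last and use that the segment following a restart is distributed as a fresh run of $\Mc$ from $\init$; then the event ``finitely many restarts and bad tail'' is a countable union of null events and hence null, so w.p.1 the tail is good, i.e.\ in $\rabin_k$. To prove the auxiliary statement I would use that a bad run reaches a bad \BSCC{} and visits all its states infinitely often w.p.1. The key structural fact is that in a bad \BSCC{} every index $i$ possessing an $\infl_i$-state also possesses an $\finl_i$-state, since otherwise the \BSCC{} would be $i$-good. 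Consequently, once the window $\secondhalf{\cdot}$ at a block boundary lies entirely inside the \BSCC{}\,---\,which holds for all large block indices, as the window length grows linearly\,---\,for each $i$ either the window contains no $\infl_i$, or by recurrence it eventually contains an $\finl_i$; either way the window fails to be $i$-good. A Borel--Cantelli argument over disjoint time stretches then yields that w.p.1 infinitely many block-boundary windows are bad, forcing a restart.

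For the converse I assume $\limsup f<\infty$, i.e.\ $f$ is bounded by some $C$ (and WLOG $f\ge 1$), and construct a chain on which $\mathfrak{S}[f]$ fails despite $\Pr(\rabin_k)=1$. Working with $k=1$ (the general case only adds unused markers), take the deterministic chain with an $\emptyset$-labelled corridor $\init=d_0\to d_1\to\cdots\to d_{2C+1}$ followed by a single state $c_0$ with a self-loop and label $\{\infl_1\}$. Every run visits $c_0$ infinitely often and never sees $\finl_1$, so all runs are good and $\Pr(\rabin_1)=1$. However, at every stage $n$ the first sampled block has length $2f(n)\le 2C$, so its second half occupies positions $f(n),\ldots,2f(n)\subseteq\{1,\ldots,2C\}$, which lie entirely in the corridor and carry no $\infl_1$; hence the first window is bad and $\mathfrak{S}[f]$ restarts at every stage, producing infinitely many restarts and never a good tail. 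Thus $\mathfrak{S}[f]$ is not a testing strategy.

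The main obstacle is the auxiliary claim in the forward direction: converting the intuition ``windows eventually sit inside a bad \BSCC{} and eventually capture its recurrent $\finl_i$-states'' into a genuine w.p.1 statement requires the recurrence and Borel--Cantelli bookkeeping sketched above, and this is where the bulk of the work (deferred to the Appendix) lies. By contrast, the finiteness-of-restarts half is essentially immediate from the quantitative Lemma \ref{lemma:restartingProb}, and the necessity direction reduces to the explicit deterministic counterexample.
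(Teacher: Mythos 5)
Your proposal is correct and takes essentially the same route as the paper: the forward direction combines Lemma~\ref{lemma:restartingProb} with $\limsup_{n\to\infty} f(n)=\infty$ to obtain infinitely many indices whose restart probability is at most $1-\Pg/2$ (hence finitely many restarts w.p.1), together with the observation that a bad run, which recurrently visits a bad \BSCC{}, is restarted w.p.1, so the tail after the last restart is good; the converse is the same bounded-$f$ corridor counterexample. Two minor points in your favour: you spell out the window-inside-\BSCC{}/recurrence (Borel--Cantelli) bookkeeping behind ``bad runs are restarted w.p.1,'' which the paper compresses into two sentences, and your corridor of $2C+2$ unlabelled states sidesteps an off-by-one edge case in the paper's $(2b+1)$-state path, where a strategy attaining $f(n)=b$ would place the $\infl_1$-state inside the first block's second half and would therefore never restart on that chain.
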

\begin{proof}
\noindent ($\Rightarrow$): We prove the contrapositive. If $\limsup_{n\to \infty} f(n) < \infty$ then there is a bound $b$ such that $f(n) \leq b$ for every $n \geq 0$. Consider a Markov chain over $2^{M_1}$ consisting of a path of $2b+1$ states, with the last state leading to itself with probability 1; the last state is labeled with $\infl_1$, and no state is labeled with $\finl_1$ . Then the chain has a unique run that goes from the initial to the last state of the path and stays there forever, and its observation is a word of $\rabin_1$; therefore, $\Pr(\rabin_1)=1$. However, since  $2 f(n) \leq 2b +1$, $\mathfrak{S}[f]$ always restarts the chain before reaching the last state.

\noindent ($\Leftarrow$): By the previous lemma, we can bound the restart probability after $n-1$ restarts by $\smash{1-\left(1-3(1-\Pmax )^{\lfloor f(n)/\Rm \rfloor-1}\right)\Pg}$. 
Because $0<\Pmax \le 1$ and and $\Pg>0$, for large enough $f(n)$ this is smaller than $1-\Pg/2<1$. Because of $\limsup_{n\to \infty} f(n)=\infty$, we have that the probability to restart the run another time is at most $1-\Pg/2$ for infinitely many $n$, and hence the total number of restarts is finite with probability $1$. A bad run would enter a bad \BSCC{} $B$ w.p.1 and would then go on to visit a set consisting of all the $f_i$ corresponding to $B$ infinitely often. Thus, $\mathfrak{S}[f]$ would restart this run and hence reached a good run when it does not restart.
\end{proof}

\section{Quantitative analysis}
\label{sec:quantitative}

The quality of a testing strategy is given by the expected number of steps until the last restart, because this is the overhead spent until a violation starts to be executed.  As in \cite{EKKW21}, given a labeled Markov chain $\Mc$ and a testing strategy $\strategy$, we define the number of steps to the last restart as random variables over the Markov chain $\restartMDPst{\strategy}$:

\begin{definition}[$S(\rho)$ and $S_n(\run)$]
Let  $\run$ be a run of $\restartMDPst{\strategy}$. We define: $S(\run)$ is equal to $0$ if $\restact$ does not occur in $\run$; it is equal to the length of the longest prefix of $\run$ ending in $\restact$, if $\restact$ occurs at least once and finitely often in $\run$; and it is equal to $\infty$ otherwise. Further, for every $n \geq 1$ we define $S_n(\run)$ to be equal to $0$ if $\restact$ occurs less than $n$ times in $\rho$; and equal to the length of the segment between the $(n-1)$-th (or the beginning of $\rho$) and the $n$-th occurrence of $\restact$.
\end{definition}

In this section we investigate the dependence of $\expected[S]$ on the function $f(n)$. A priori it is unclear whether $f(n)$ should grow fast or slow. Consider the case in which all \BSCC{}s of the chain, good or bad, have size 1, and a run eventually reaches a good \BSCC{} with probability $p$. In this case the strategy restarts the chain until a sample reaches a good \BSCC{} for the first time. If $f(n)$ grows fast, then after a few restarts, say $r$, every subsequent run reaches a \BSCC{} of the chain with large probability, and so the expected number of restarts is small, at most $r + (1/p)$. However, the number of steps executed during these few restarts is large, because $f(n)$ grows fast; indeed, only the run after the penultimate restart executes already at least $2f(r + (1/p) -1)$ steps.

In a first step we show that $\expected[S]= \infty$ holds for every function $f(n) \in 2^{\Omega(n)}$. 

\begin{proposition}
Let $f \in 2^{\Omega(n)}$. Then there exists a Markov chain such that the testing strategy of Figure \ref{fig:strat} satisfies $\expected(S)=\infty$.
\end{proposition}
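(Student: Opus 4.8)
The plan is to exhibit a three-state chain on which $\mathfrak{S}[f]$ performs a geometrically distributed number of independent ``attempts'', where the $m$-th attempt already costs at least $2f(m-1)$ steps, and then to calibrate the chain so that for $f \in 2^{\Omega(n)}$ the expected cost diverges. This realises exactly the scenario sketched before the proposition, with all \BSCC{}s of size $1$. Concretely, I would take $\Mc$ with states $\init, g, b$ and observations in $2^{M_1}$, where $\Lab(g) = \{\infl_1\}$ and $\Lab(\init) = \Lab(b) = \emptyset$, transitions $\Pm(\init, g) = p$ and $\Pm(\init, b) = 1-p$, and self-loops making $g, b$ absorbing. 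Then $\{g\}$ is a good \BSCC{} and $\{b\}$ a bad one, and $\Pg = \Pr(\rabin_1) = p > 0$, so this is a genuine instance; moreover $f \in 2^{\Omega(n)}$ gives $\limsup_{n} f(n) = \infty$, so by Theorem~\ref{theorem:testingstrategies} the number of restarts is finite w.p.1 and $S < \infty$ almost surely, even though I will show $\expected[S] = \infty$.

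The next step is to analyse a single attempt, i.e.\ one iteration of the outer loop of Figure~\ref{fig:strat} with counter value $n$. Since the length-$0$ path is good, the inner loop samples at least one block of $2f(n)$ steps, and within that block the chain leaves $\init$ and is absorbed in $g$ or $b$. If it reaches $g$ (probability $p$), every later state carries $\infl_1$, so the second half of the path stays good forever and the attempt never restarts; if it reaches $b$ (probability $1-p$), the second half eventually consists only of $b$-states and the attempt restarts. Hence the attempts are i.i.d., the number of restarts is geometric with $\Pr[\restart = m] = (1-p)^m\, p$, and since the last (i.e.\ $\restart$-th) attempt is run with counter value $\restart - 1$ and sampled at least one block of length $2f(\restart-1)$ before restarting, we get $S \ge 2 f(\restart - 1)$ on $\{\restart \ge 1\}$. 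Note I only need this one-term lower bound, not the exact length of each attempt, which keeps the bookkeeping light.

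Finally I lower-bound the expectation by a single term of the geometric weighting:
\[
\expected[S] \;\ge\; \sum_{m \ge 1} 2\, f(m-1)\,(1-p)^m\, p.
\]
Writing $f \in 2^{\Omega(n)}$ as $f(n) \ge 2^{cn}$ for some $c > 0$ and all $n \ge N$, I choose $p \coloneqq 1 - 2^{-c} \in (0,1)$, so that $(1-p)\,2^{c} = 1$. For every $m \ge N+1$ the summand is then at least $2 \cdot 2^{c(m-1)} \cdot 2^{-cm} \cdot p = 2^{1-c} p > 0$, a positive constant independent of $m$; the series diverges and $\expected[S] = \infty$. The only genuinely delicate point is this calibration: the value of $p$ must be tuned to $c$ so that the geometric decay $(1-p)^m$ exactly matches (or is dominated by) the exponential growth $2^{cm}$ of $f$, making the terms fail to vanish. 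Everything else—the identification of the size-$1$ \BSCC{} chain and the verification that a $b$-attempt restarts while a $g$-attempt never does—is routine once the strategy's loop structure is unwound.
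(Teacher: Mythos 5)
Your proposal is correct and takes essentially the same route as the paper: calibrate the probability of a bad attempt, $1-\Pg$, to the exponential growth rate of $f$ (you set $1-p = 2^{-c}$; the paper sets $1-\Pg = (1/2)^{1/k}$), and lower-bound $\expected[S]$ by a series whose terms behave like $f(n)\cdot(1-\Pg)^{n}$ and are bounded away from zero, the only difference being that you instantiate an explicit three-state chain and compute the geometric distribution of $\restart$ directly, where the paper argues abstractly about any chain with that value of $\Pg$. One remark: the paper reads $f \in 2^{\Omega(n)}$ in the limsup sense (there is $k$ with $f(n)\,(1/2)^{n/k}$ bounded away from $0$ for \emph{infinitely many} $n$), so your clause ``for all $n\ge N$'' should be weakened to ``for infinitely many $n$''; your divergence argument survives verbatim, since infinitely many summands bounded below by the positive constant $2^{1-c}p$ already force $\expected[S]=\infty$.
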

\begin{proof}
Let $f$ be in $2^{\Omega(n)}$. Then there exists some positive integer $k>0$ such that we have
$\limsup_{n\to \infty} f(n)\cdot (1/2)^{n/k}>0.$
Consider a Markov chain with $\Pg=1-(1/2)^{1/k}$. Then we have $
\expected(S)=\sumi_{n=0}\expected (S_n\mid \restart\ge n-1)P(\restart\ge n-1).$
We have that $P(\restart\ge n-1\mid \restart\ge n-2)\ge 1-\Pg$ because only good runs will not be restarted. We also have that $\expected (S_n\mid \restart\ge n-1)\ge f(n)(1-\Pg)$ because of the same reason. Thus
\begin{align*}
	\expected[S]=&\sumi_{n=0}\expected (S_n\mid \restart\ge n-1)P(\restart\ge n-1)
	\ge \sumi_{n=0} f(n)(1-\Pg)\cdot (1-\Pg)^n
\end{align*}
and hence $\expected[S]\ge \displaystyle\sumi_{n=0} f(n)(1/2)^{n/k}=\infty$.
\end{proof}
It follows that (if we limit ourselves to monotonic functions, which is no restriction in practice), we only need to consider functions $f(n)$ satisfying $f(n) \in \omega(1) \cap 2^{o(n)}$. In the rest of the section we study the strategies corresponding to polynomial functions $f(n) = n^c$ for $c \in \N_+$, and obtain an upper bound as a function of the parameters $\Rm/\Pmax $, $\Pg$, and $c$. The study of subexponential but superpolynomial functions is beyond the scope of this paper.


\subsection{Quantitative analysis of strategies with $f(n) = n^c$}

We give an upper bound on $\expected (S)$, the expected total number of steps before the last restart. Our starting point is Lemma \ref{lemma:restartingProb}, which bounds the probability to restart for the $n$-th time, if $(n-1)$ restarts have already happened.  When the number $n$ of restarts is small, the value of the right-hand-side is above $1$, and so the bound is not useful. We first obtain a value $X$ such that after $X$ restarts the right-hand-side drops below $1$.

\begin{lemma}
	\label{lemma:ChoiceOfX}
	Let $X = \sqrt[c]{\Rm\left(2+\ln(1/6)/\ln(1-\Pmax )\right)}$.
	For all $n\ge X$, we have 
	\begin{equation*}
		\Pr [\restart\ge n\mid \restart\ge n-1]\le 1-\Pg/2
	\end{equation*}
when restarting according to $\mathfrak{S}[n \mapsto n^c]$.
\end{lemma}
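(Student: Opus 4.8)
The plan is to read off the value of $X$ directly from the bound in Lemma \ref{lemma:restartingProb} by solving for the smallest $n$ at which its right-hand side falls to $1-\Pg/2$. Substituting $f(n)=n^c$ into that lemma gives
\[
\Pr[\restart\ge n\mid \restart\ge n-1]\;\le\;1-\Pg\Bigl(1-3(1-\Pmax)^{\lfloor n^c/\Rm\rfloor-1}\Bigr).
\]
Since $\Pg>0$, the desired inequality $\Pr[\restart\ge n\mid\restart\ge n-1]\le 1-\Pg/2$ is implied by $1-3(1-\Pmax)^{\lfloor n^c/\Rm\rfloor-1}\ge 1/2$, which rearranges to the single scalar condition
\[
(1-\Pmax)^{\lfloor n^c/\Rm\rfloor-1}\le \tfrac16 .
\]
So the whole lemma reduces to finding when this holds.

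First I would dispose of the degenerate case $\Pmax=1$: then $1-\Pmax=0$, and as soon as the exponent $\lfloor n^c/\Rm\rfloor-1$ is positive the left-hand side is $0\le 1/6$, so the bound holds; one also checks that $X$ is well defined in this limit since $\ln(1/6)/\ln(1-\Pmax)\to 0$. For the main case $0<\Pmax<1$ both $\ln(1-\Pmax)$ and $\ln(1/6)$ are negative, so taking natural logarithms and dividing by the negative quantity $\ln(1-\Pmax)$ reverses the inequality and turns the condition into
\[
\lfloor n^c/\Rm\rfloor-1\;\ge\;\frac{\ln(1/6)}{\ln(1-\Pmax)},
\]
where the right-hand side is now positive.

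Finally I would remove the floor with the estimate $\lfloor x\rfloor\ge x-1$, so that it suffices to have $n^c/\Rm-2\ge \ln(1/6)/\ln(1-\Pmax)$, i.e.\ $n^c\ge \Rm\bigl(2+\ln(1/6)/\ln(1-\Pmax)\bigr)$; taking $c$-th roots yields exactly $n\ge X$, which is the claim. I do not expect a genuine obstacle here, as the argument is a direct inversion of Lemma \ref{lemma:restartingProb}; the only points needing care are the sign flip when dividing through by the negative logarithm $\ln(1-\Pmax)$ and the one-step slack introduced by the floor estimate, both of which are already absorbed into the constant $2$ appearing in the definition of $X$.
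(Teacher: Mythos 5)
Your proposal is correct and follows essentially the same route as the paper: the paper's proof simply plugs $f(n)=n^c$ with $n\ge X$ into Lemma \ref{lemma:restartingProb} and leaves the "some calculations" implicit, which are exactly the calculations you carry out (reduction to $(1-\Pmax)^{\lfloor n^c/\Rm\rfloor-1}\le \tfrac16$, the sign flip when dividing by $\ln(1-\Pmax)$, and absorbing the floor via $\lfloor x\rfloor\ge x-1$ into the constant $2$). Your version is in fact more complete than the paper's, since it makes the inversion explicit and handles the degenerate case $\Pmax=1$.
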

\begin{proof}
Follows immediately from Lemma \ref{lemma:restartingProb}, the fact that the restart probability decreases with $n$, the definition of $X$, and some calculations. We recall the statement of Lemma \ref{lemma:restartingProb}:
\[
\Pr [\restart\ge n\mid \restart\ge n-1]\le 1- \Pg \left( 1 - 3(1-\Pmax )^{\lfloor f(n)/\Rm \rfloor-1} \right) 
\]
Plugging in an $n\ge X$ validates the claim.
\end{proof}

We now try to find a bound for $\expected[S]$: By linearity of expectation, we have $\expected[S]=\sum_{i=1}^\infty \expected[S_n]$. We split the sum into two parts: for $n<X$, and for $n\ge X$. For $n<X$ we just approximate $\Pr[\restart\ge n-1]$ by $1$. For $n>X$ we can say more thanks to Lemma \ref{lemma:ChoiceOfX}: 
	\begin{align*}
		\Pr[\restart\ge n-1]&=\Pr[\restart\ge n-1 | \restart\ge n-2] \cdots \Pr[\restart\ge X+1 | \restart\ge X]\cdot \Pr[\restart\ge X]\\
		&\le \prod_{k=\lceil X\rceil}^n \Pr[\restart\ge k | R\ge k-1] \le \left(1-\Pg/2\right)^{n-X}
	\end{align*}
	This yields:
	\begin{align*}
		\expected[S]&=\sum_{n=0}^\infty \expected[S_n\mid\restart \ge n-1]\Pr[\restart\ge n-1]\\ 
		&\le \sum_{n=0}^{X}\expected[S_n\mid\restart \ge n-1] + \sum_{n=X}^\infty \expected[S_n\mid\restart \ge n-1]\cdot \left(1-\Pg/2\right)^{n-X}  \tag{1}
	\end{align*}
It remains to bound the expected number of steps between two restarts $\expected[S_n\mid\restart \ge n-1]$, which is done in 
Lemma \ref{lemma:boundnsmall} below. The proof can be found in the Appendix. The proof first observes that the expected number of steps it takes to reach a good or a bad \BSCC{} is $r_\gamma/p_\gamma$ resp. $r_\beta/p_\beta$. Then we give a bound on the expected number of steps it takes to perform a progress path inside a bad \BSCC{} for the first time, or to not perform a progress path inside a good \BSCC{} for an entire second half of a run at some point after the $(n-1)$-st restart; the bound is also in terms of $\Rm/\Pmax $ and $\Rm/\Pmax (1 - P_\gamma)$. The term $2 f(n)$ comes from the fact that the strategy always executes at least $2 f(n)$ steps. The term $2 \Rm$ is an artifact due to the ``granularity'' of the analysis, where we divide runs in blocks of $\Rm$ steps.

\begin{restatable}[Expected number of steps in a fragment]{lemma}{stepsfragment}
	\label{lemma:boundnsmall}
	For the strategy $\mathfrak{S}[n \mapsto n^c]$ we have:
\begin{equation}
	\expected[S_n\mid\restart \ge n-1]\le 2 (\Rm+f(n))+9\left(\frac{\Rm}{\Pmax (1-P_\gamma)}\right).
\end{equation}
\end{restatable}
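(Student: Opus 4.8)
The plan is to bound the expected length of a single fragment $S_n$ by splitting it into the time to reach a \BSCC{} and the time spent inside it before the strategy restarts, and to control each part by a geometric argument over consecutive segments of $\Rm$ steps. Since the fragment under consideration starts at $\init$ right after the $(n-1)$-th restart, and $S_n=0$ whenever no $n$-th restart occurs, it suffices to bound $\expected[S_n\cdot\mathbf 1[\restart\ge n]]$. Throughout I work with $k=1$, as the paper does. I would first record two elementary facts obtained by grouping a run into blocks of length $\Rm$: from any state of $\St_\gamma\setminus B_\gamma$ (resp.\ the analogous transient states for bad \BSCC{}s) such a block reaches the target \BSCC{} with probability at least $\Pmax$, so the expected number of steps until the fragment enters some \BSCC{} is at most $\Rm/\Pmax$; and inside a \BSCC{} each block completes a progress path (hits $\infl_1$ in a good \BSCC{}, hits $\finl_1$ in a bad one containing $\infl_1$-states, trivially otherwise) with probability at least $\Pmax$, whereas in a good \BSCC{} a block fails to contain $\infl_1$ with probability at most $1-P_\gamma$.

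Next I would treat the two kinds of terminal \BSCC{} separately, as the statement suggests. If the fragment enters a bad \BSCC{}, a restart is guaranteed: once the current second half lies inside the \BSCC{} and is long enough it is bad (it contains $\finl_1$, or it contains no $\infl_1$ at all), so I would bound the expected number of additional steps until the first bad second half by a geometric tail with rate $\Pmax$ over the $\Rm$-blocks, giving $O(\Rm/\Pmax)$ on top of the unavoidable first block of $2f(n)$ steps and a $2\Rm$ granularity correction. If instead the fragment enters a good \BSCC{}, then $S_n=0$ unless the strategy restarts incorrectly, and I would estimate $\expected[S_n\mid\text{good \BSCC}]=\sum_{m\ge 1}2mf(n)\,\Pr[\text{first bad second half at block }m]$, using the Markov property over the $\Rm$-blocks of the window $[mf(n),2mf(n)]$ to bound the probability that it contains no $\infl_1$ by $(1-P_\gamma)^{\lfloor mf(n)/\Rm\rfloor}$. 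This is the term responsible for the factor $1/(1-P_\gamma)$.

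The main obstacle is precisely this good-\BSCC{} estimate. The naive bound $\Pr[\text{restart at block }m]\le (1-P_\gamma)^{\lfloor mf(n)/\Rm\rfloor}$ is too lossy when $f(n)<\Rm$, because the growing second half is then shorter than the analysis granularity and successive windows are strongly correlated, so summing $2mf(n)$ against it would spuriously introduce $f(n)$ in the denominator. I would therefore split the sum at $m_0=\lceil\Rm/f(n)\rceil$, the first block at which the second half reaches length $\Rm$: for $m<m_0$ the restart length is at most $2\Rm$, which I charge to the granularity term; for $m\ge m_0$ I would exploit the identity $(1-P_\gamma)^{(f(n)/\Rm)(\Rm/f(n))}=1-P_\gamma$ so that the geometric tail telescopes, the boundary term $2f(n)\,m_0$ again contributing $O(\Rm)$ and the remaining sum contributing $O\!\left(\Rm/(\Pmax(1-P_\gamma))\right)$.

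Finally I would assemble the pieces. Adding the cost $\Rm/\Pmax$ of reaching a \BSCC{}, the minimal first block $2f(n)$, the granularity correction $2\Rm$, and the in-\BSCC{} waiting cost, and using $\Rm/\Pmax\le \Rm/(\Pmax(1-P_\gamma))$ to merge the bad- and good-\BSCC{} contributions into one term, yields $2(\Rm+f(n))+9\,\Rm/(\Pmax(1-P_\gamma))$, with the constant $9$ absorbing the factors of $2$ and $3$ coming from the geometric sums (as in Lemma~\ref{lemma:PRestartGood}) and from the telescoping in the good-\BSCC{} case. The step I expect to be most delicate is keeping the good-\BSCC{} sum bounded uniformly in $f(n)$: it must stay controlled both when $f(n)$ is small (many short, correlated windows, handled by the $2\Rm$ term) and when $f(n)$ is large (rare but long windows, handled by the telescoped tail).
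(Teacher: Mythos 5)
Your overall architecture---reaching a \BSCC{} via blocks of length $\Rm$, treating bad and good \BSCC{}s separately, and geometric tails over blocks---is the same as the paper's, and your bad-\BSCC{} case is essentially the paper's. The genuine gap is in the good-\BSCC{} estimate, precisely the step you flag as delicate. Bounding $\Pr[\text{first bad second half at check }m]$ by $\Pr[\text{window }[mf(n),2mf(n)]\text{ misses }\infl_1]\le(1-P_\gamma)^{\lfloor mf(n)/\Rm\rfloor}$ discards the constraint that no \emph{earlier} check restarted, and this loss is fatal: the resulting sum is too large by a factor of order $\Rm/f(n)$ or $1/P_\gamma$. Concretely, take $P_\gamma=\Pmax=1/2$, $f(n)=1$, $\Rm$ large: for each $j\ge 1$ there are about $\Rm$ checks $m$ with $\lfloor m/\Rm\rfloor=j$, each with restart length about $2j\Rm$ and term-by-term bound $2^{-j}$, so your sum over $m\ge m_0$ is $\Theta\bigl(\sum_j \Rm\cdot j\Rm\,2^{-j}\bigr)=\Theta(\Rm^2)$, while the lemma's right-hand side is $\Theta(\Rm)$. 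Likewise, for $f(n)=\Rm$ and $P_\gamma=\Pmax$ small, your sum is $\Theta(\Rm/P_\gamma^2)$ against a budget of $\Theta(\Rm/P_\gamma)$. The split at $m_0$ and the ``telescoping'' cannot repair this, because the defect sits in each term, not in how the terms are summed; even exploiting disjointness of the first-restart events (which you do use for $m<m_0$) only brings the first example down to $\Theta(\Rm\log\Rm)$.

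The missing idea is the paper's use of the first-occurrence structure. The paper indexes the restart by elapsed time, in blocks of length $2R_\gamma$ independent of $f(n)$, and observes that if the restart falls in time-block $v$ then (b) no $\infl_1$-state is visited in $[(v+1)R_\gamma,2vR_\gamma]$ \emph{and} (a) some $\infl_1$-state is visited in $[(v-1)R_\gamma,(v+1)R_\gamma]$, for otherwise an earlier check would already have seen a bad second half. Applying the Markov property after that forced visit lets the paper bound the probability of a restart in block $v$ by (roughly) $2P_\gamma(1-P_\gamma)^{v-1}$ rather than $(1-P_\gamma)^{v-1}$; this extra factor of order $P_\gamma$ is exactly what collapses the expected-length sum to $\sum_v O(vR_\gamma)\,P_\gamma(1-P_\gamma)^{v-1}=O\bigl(R_\gamma/(P_\gamma(1-P_\gamma))\bigr)$, uniformly in $f(n)$. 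Without an argument of this kind your estimates do not imply the stated bound $2(\Rm+f(n))+9\Rm/(\Pmax(1-P_\gamma))$.
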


\noindent Plugging Lemma \ref{lemma:boundnsmall} into (1), we finally obtain (see the Appendix):

\begin{restatable}[Expected number of total steps]{theorem}{totalsteps}
	\label{theorem:totalsteps}
	For the strategy $\mathfrak{S}[n \mapsto n^c]$ we have:
	\begin{equation*}
		\Esteps \in \bigO \left((c+1)!\cdot 2^c\cdot \left(\frac{\Rm}{\Pmax }\right)^{1+1/c}+\frac{2^c(c+1)!}{P_{good}^{c+1}} + (c+1)!(2c)^{c+1}\right).
	\end{equation*}
\end{restatable}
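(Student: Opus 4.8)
The plan is to substitute the per-fragment bound of Lemma~\ref{lemma:boundnsmall} into inequality~(1) and then evaluate the two resulting sums. Writing $C \coloneqq 2\Rm + 9\Rm/(\Pmax(1-P_\gamma))$, Lemma~\ref{lemma:boundnsmall} gives $\expected[S_n \mid \restart \ge n-1] \le 2n^c + C$ for $f(n)=n^c$, so that
\[
\Esteps \le \sum_{n=0}^{\lceil X\rceil}\bigl(2n^c + C\bigr) + \sum_{n=\lceil X\rceil}^\infty \bigl(2n^c + C\bigr)\bigl(1-\Pg/2\bigr)^{n-X},
\]
with $X$ as in Lemma~\ref{lemma:ChoiceOfX}. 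I would treat the two sums separately: the first is a finite polynomial sum, the second a polynomial-times-geometric tail.

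For the finite head, every term has $n \le X$, so $n^c \le X^c$ and there are at most $\lceil X\rceil + 1$ summands; bounding $\sum_{n=0}^{\lceil X\rceil} n^c$ by $\bigO(X^{c+1})$ and the constant part by $\bigO(X\cdot C)$ reduces the head to $\bigO(X^{c+1} + XC)$. For the tail, set $q \coloneqq 1-\Pg/2$, shift the index via $n = \lceil X\rceil + m$, and expand $(\lceil X\rceil+m)^c = \sum_{i=0}^c \binom{c}{i}\lceil X\rceil^{c-i} m^i$. The key analytic ingredient is the elementary bound $\sum_{m\ge 0} m^i q^m \le i!/(1-q)^{i+1}$, which follows from $m^i \le i!\binom{m+i}{i}$ together with the generating identity $\sum_{m\ge 0}\binom{m+i}{i}q^m = (1-q)^{-(i+1)}$. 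Substituting $1/(1-q) = 2/\Pg$ turns the dominant ($i=c$) contribution into $\bigO\!\bigl(c!\,(2/\Pg)^{c+1}\bigr) = \bigO\!\bigl(2^c c!/\Pg^{c+1}\bigr)$, while the lower-order binomial cross-terms contribute factors of the form $\lceil X\rceil^{c-i}(2/\Pg)^{i+1}$ and the $C$-part contributes $\bigO(C/\Pg)$.

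It then remains to re-express $X$ in terms of $\Rm/\Pmax$. From the definition $X^c = \Rm\bigl(2 + \ln(1/6)/\ln(1-\Pmax)\bigr)$ and the inequality $\ln(1-\Pmax) \le -\Pmax$ one gets $\ln(1/6)/\ln(1-\Pmax) \le \ln 6/\Pmax$, hence $X^c \le 4\Rm/\Pmax$ and $X^{c+1} \le (4\Rm/\Pmax)^{1+1/c}$. Feeding this into the head bound converts $\bigO(X^{c+1})$ into the first stated summand $\bigO\!\bigl((c+1)!\,2^c(\Rm/\Pmax)^{1+1/c}\bigr)$, the tail's dominant term is exactly the second summand $\bigO\!\bigl(2^c(c+1)!/\Pg^{c+1}\bigr)$, and the remaining cross-terms together with the $c$-dependent prefactors are absorbed into the third summand $(c+1)!(2c)^{c+1}$. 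Throughout, each concrete combinatorial prefactor ($1/(c+1)$, $4^{1/c}$, $\binom{c}{i}i!$, etc.) is generously bounded by the common dominating factors $(c+1)!$ and $2^c$, which is what produces the clean final form.

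The main obstacle I expect is the bookkeeping of the tail sum: establishing the polynomial-geometric estimate $\sum_{m} m^i q^m \le i!/(1-q)^{i+1}$ uniformly in $i$, and then showing that, after the substitution $1/(1-q)=2/\Pg$, the full collection of binomial cross-terms \emph{and} the constant $C$ (which itself carries the $(1-P_\gamma)^{-1}$ factor inherited from Lemma~\ref{lemma:boundnsmall}) collapse into the three clean, parameter-free prefactors without introducing any dependence not already dominated. A secondary but essential point is the logarithmic estimate $X^c \le 4\Rm/\Pmax$, since this is precisely what converts the $c$-th root in $X$ into the asymptotically correct exponent $1+1/c$ on $\Rm/\Pmax$; getting this inequality in the right direction (note that $\ln(1-\Pmax) < 0$, so dividing reverses the sense) is where a sign error would be easy to make.
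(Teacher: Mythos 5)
Your proposal is correct and follows the same skeleton as the paper's proof: the same split of $\sum_n \expected[S_n\mid\restart\ge n-1]\Pr[\restart\ge n-1]$ at the threshold $X$ of Lemma~\ref{lemma:ChoiceOfX}, the same head estimate $\bigO(X^{c+1}+XC)$ with $C=2\Rm+9\Rm/(\Pmax(1-P_\gamma))$, and the same geometric decay $(1-\Pg/2)^{n-X}$ on the tail. The genuine difference lies in the key technical ingredient used to evaluate the polynomial-times-geometric tail. The paper proves a bespoke lemma (Lemma~\ref{lemma:verygoodlemma}) by induction on $c$, obtaining $\sum_{n\ge X} n^c p^{n-X}\le (c+1)!\bigl((X+c)^c/(1-p)+1/(1-p)^{c+1}\bigr)$ in one closed form; you instead shift the index, expand $(\lceil X\rceil+m)^c$ binomially, and invoke the classical negative-binomial identity $\sum_{m\ge 0}\binom{m+i}{i}q^m=(1-q)^{-(i+1)}$ together with $m^i\le i!\binom{m+i}{i}$. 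Both routes produce comparable factorial constants, and your cross-term bookkeeping does close: each term $\binom{c}{i}i!\,\lceil X\rceil^{c-i}(2/\Pg)^{i+1}$ has exponents summing to $c+1$, so it is at most $\binom{c}{i}i!\bigl(\lceil X\rceil^{c+1}+(2/\Pg)^{c+1}\bigr)$, and $\sum_{i=0}^c\binom{c}{i}i!\le e\cdot c!$, so everything is absorbed into the three stated summands as you claim. What your route buys is transparency (a standard generating-function fact replaces an ad hoc induction, and the cross terms are explicit); what the paper's route buys is a reusable packaged lemma. Your explicit estimate $X^c\le 4\Rm/\Pmax$ via $\ln(1-\Pmax)\le-\Pmax$ is a detail the paper glosses over entirely, and you correctly handle the sign reversal there. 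One shared blemish: both you and the paper carry the factor $\Rm/(\Pmax(1-P_\gamma))$ from Lemma~\ref{lemma:boundnsmall} through all intermediate bounds and then silently drop the $(1-P_\gamma)^{-1}$ dependence in the final $\bigO$; you at least flag this absorption as an obstacle, but neither argument actually justifies it, so this is a defect inherited from the paper rather than one introduced by you.
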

If we fix a value $c$, we obtain a much simpler statement:
\begin{corollary}
	\label{cor:FinalBound}
	For a fixed $c$, the strategy $\mathfrak{S}[n \mapsto n^c]$ satisfies:
\begin{equation*}
	\Esteps \in \bigO \left(\left(\frac{\Rm}{\Pmax }\right)^{1+1/c}+\frac{1}{\Pg^{c+1}}\right).
\end{equation*}
\end{corollary}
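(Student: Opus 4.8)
The plan is to derive the corollary directly from Theorem \ref{theorem:totalsteps} by treating $c$ as a fixed constant. The key observation is that the asymptotic $\bigO(\cdot)$ is taken over the chain-dependent parameters $\Rm$, $\Pmax$, and $\Pg$, whereas $c$ is held fixed. Consequently, every factor appearing in the bound of Theorem \ref{theorem:totalsteps} that depends only on $c$\,---\,namely $(c+1)!$, $2^c$, and $(2c)^{c+1}$\,---\,is a constant and may be absorbed into the hidden multiplicative constant of the $\bigO$.

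First I would rewrite the bound of Theorem \ref{theorem:totalsteps} by treating its three summands separately. The first summand $(c+1)!\cdot 2^c\cdot (\Rm/\Pmax)^{1+1/c}$ loses its $c$-dependent prefactor and becomes $\bigO((\Rm/\Pmax)^{1+1/c})$. The second summand $2^c(c+1)!/\Pg^{c+1}$ likewise becomes $\bigO(1/\Pg^{c+1})$. The third summand $(c+1)!(2c)^{c+1}$ contains no chain parameter whatsoever, so it is simply $\bigO(1)$.

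It then remains to absorb this $\bigO(1)$ term into one of the two surviving terms. For this I would invoke the elementary facts that the progress radius satisfies $\Rm \geq 1$ and that $\Pmax \leq 1$, so that $\Rm/\Pmax \geq 1$ and hence $(\Rm/\Pmax)^{1+1/c} \geq 1$; equivalently, since $\Pg \leq 1$ we have $1/\Pg^{c+1} \geq 1$. Either inequality shows that the constant term is dominated by a main term and can therefore be dropped from the asymptotic expression, yielding the stated bound
\[
	\Esteps \in \bigO \left(\left(\frac{\Rm}{\Pmax }\right)^{1+1/c}+\frac{1}{\Pg^{c+1}}\right).
\]

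The main\,---\,and indeed only\,---\,subtlety is the bookkeeping of which quantities are regarded as variables in the $\bigO$ and which are treated as constants once $c$ is fixed. There is no genuine analytic obstacle here, since all of the substantive work, i.e.\ the explicit dependence on $c$, on $\Rm/\Pmax$, and on $\Pg$, has already been carried out in Theorem \ref{theorem:totalsteps}; the corollary is a clean specialization of that result.
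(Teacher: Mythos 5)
Your proposal is correct and matches the paper's own derivation: the paper obtains Corollary \ref{cor:FinalBound} exactly by specializing Theorem \ref{theorem:totalsteps} to fixed $c$, absorbing the $c$-dependent factors $(c+1)!$, $2^c$, and $(2c)^{c+1}$ into the $\bigO$ constant. Your additional remark that the additive constant term is dominated because $\Rm/\Pmax \geq 1$ (or $1/\Pg^{c+1}\geq 1$) is a small piece of bookkeeping the paper leaves implicit, but it is the same argument.
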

Thus the bound on the total number of steps depends on two quantities, $\Rm/\Pmax$ and $\Pg$. A small $c$ favours the effect of $\Rm/\Pmax$ on the bound, a larger $c$ the effect of $\Pg$. 
In Section \ref{sec:experiments} we will see that this closely matches the performance of the algorithms for different values of $c$ on synthetic Markov chains and on Markov chains from the \acron{PRISM} benchmark set.
\subsection{Optimality of the strategy $f(n) = n^c$}
We will prove the following optimality guarantee for our strategies.
\begin{theorem}
	For every $c \in \N_+$ there is a family of Markov chains such that our bound of Corollary \ref{cor:FinalBound} on $\mathfrak{S}[n\mapsto n^c]$ is asymptotically optimal, i.e., no other black-box testing strategy is in a better asymptotic complexity class.
\end{theorem}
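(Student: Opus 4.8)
The plan is to establish a matching lower bound, showing that for each $c \in \N_+$ no black-box testing strategy can beat the asymptotic complexity class of $\mathfrak{S}[n\mapsto n^c]$ as given in Corollary \ref{cor:FinalBound}, namely $\bigO\left((\Rm/\Pmax)^{1+1/c}+1/\Pg^{c+1}\right)$. Since the upper bound is a sum of two terms, the natural approach is to construct, for each $c$, two separate families of Markov chains that are adversarial for the two terms respectively: one family exposing the $(\Rm/\Pmax)^{1+1/c}$ dependence and another exposing the $1/\Pg^{c+1}$ dependence. The key conceptual point is that a black-box strategy sees only the observations, not the states or probabilities, so we have the freedom to tune the chain parameters ($\Rm$, $\Pmax$, $\Pg$) against any fixed strategy while keeping the observation structure fixed.

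First I would address the $1/\Pg^{c+1}$ term. Here the idea is to build a simple family in which all \BSCC{}s have size $1$ (so that $\Rm$ and $\Pmax$ are bounded by constants and the first term is negligible), but the probability $\Pg$ of ever reaching a good \BSCC{} is a small tunable parameter $p$. Intuitively, such a chain forces roughly $1/p$ restarts before a good run is sampled, and because the strategy must execute at least $2f(n) = 2n^c$ steps in the $n$-th fragment, the total cost up to the $\Theta(1/p)$-th restart accumulates to $\Theta((1/p)^{c+1}) = \Theta(1/\Pg^{c+1})$ steps. The argument is essentially that any strategy, seeing only the bad observations before a good \BSCC{} is hit, cannot distinguish the ``still exploring'' situation from a genuinely bad run faster than the analysis in Lemma \ref{lemma:restartingProb} allows; a careful lower-bound on $\expected[S]$ via the sum $\sum_n \expected[S_n \mid \restart \ge n-1]\Pr[\restart \ge n-1]$ then yields the claim.

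Second I would address the $(\Rm/\Pmax)^{1+1/c}$ term, which I expect to be the harder half. The goal is a family where $\Pg$ is bounded away from $0$ (making the second term a constant) while $\Rm/\Pmax$ grows, and where the strategy is forced to spend $\Omega((\Rm/\Pmax)^{1+1/c})$ steps. The construction should make the witness radius $R_\gamma$ large: a good \BSCC{} in which producing a state labeled $\infl_1$ requires traversing a long path (length $\sim \Rm$) that is only taken with small probability ($\sim \Pmax$), so the expected time to witness goodness inside the \BSCC{} is $\sim \Rm/\Pmax$. The subtlety is the exponent $1+1/c$: the strategy only samples blocks of length $2n^c$, so it can witness goodness early only if $n^c \gtrsim \Rm/\Pmax$, i.e. only after $n \gtrsim (\Rm/\Pmax)^{1/c}$ restarts, and each such fragment already costs $\sim (\Rm/\Pmax)$ steps, giving the product $(\Rm/\Pmax)^{1/c}\cdot(\Rm/\Pmax) = (\Rm/\Pmax)^{1+1/c}$. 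Making this lower bound hold against \emph{every} black-box strategy, not merely against $\mathfrak{S}[n\mapsto n^c]$, is the crux: one must argue that no observation-based strategy can detect that the \BSCC{} is good without actually waiting for an $\infl_1$-labeled state, and that restarting prematurely only wastes the work already invested.

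The main obstacle, then, is the universal quantification over strategies in the $(\Rm/\Pmax)^{1+1/c}$ case. I would handle it by an indistinguishability argument: design the family so that, for the relevant prefix length, the observation distribution of a good chain with large witness radius is statistically close to (or a sub-distribution of) that of a chain that is genuinely bad, so that any strategy achieving few steps on the former would fail the correctness requirement (finitely many restarts and a good tail w.p.1) on the latter. Concretely, I would fix the strategy $\strategy$, and analyse the induced Markov chain $\restartMDPst{\strategy}$ on the adversarial family, lower-bounding $\expected[S]$ by summing the guaranteed per-fragment cost $\gtrsim \min(f(n),\Rm/\Pmax)$ over the $\Omega((\Rm/\Pmax)^{1/c})$ restarts that must occur before any fragment is long enough to reveal a witness. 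Throughout, I would keep the observation alphabet and the coarse topology of the chains fixed across the family while scaling the hidden loop-probabilities, so that a single strategy faces the whole family and cannot adapt its block-growth rate to the particular $\Rm/\Pmax$ it is confronting.
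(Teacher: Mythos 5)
Your plan has a genuine structural flaw: the two-family decomposition cannot prove the theorem, because neither term of the bound in Corollary \ref{cor:FinalBound} is optimal in isolation. The theorem asserts the existence of a family on which the \emph{combined} bound evaluated for $\mathfrak{S}[n\mapsto n^c]$ cannot be beaten by any other testing strategy. On your first family (all \BSCC{}s of size $1$, so $\Rm/\Pmax$ constant and $\Pg$ small), the bound evaluates to $\bigO\left(1/\Pg^{c+1}\right)$; but the competing strategy $\mathfrak{S}[n\mapsto n]$ achieves $\bigO\left(1/\Pg^{2}\right)$ on that same family, which is a strictly better asymptotic class whenever $c\ge 2$, so the lower bound you want there is simply false. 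Symmetrically, on your second family ($\Pg$ bounded away from $0$, $\Rm/\Pmax$ growing), the strategy $\mathfrak{S}[n\mapsto n^{c'}]$ with $c'>c$ achieves $\bigO\left((\Rm/\Pmax)^{1+1/c'}\right)$, beating $(\Rm/\Pmax)^{1+1/c}$; no indistinguishability argument can rescue a claim that is not true. A secondary, related error: your argument for the $1/\Pg^{c+1}$ family appeals to the fact that ``the strategy must execute at least $2f(n)=2n^c$ steps in the $n$-th fragment'', which is a property of $\mathfrak{S}[n\mapsto n^c]$ itself, whereas the quantity being lower-bounded ranges over \emph{all} black-box strategies, which need not have any block structure.

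The paper's proof avoids this by using a \emph{single} family in which the two terms are deliberately balanced: with $M=k^{c-1}$ and $p=q=1/k$ one gets $\Rm/\Pmax=k^{c}$ and $\Pg=1/k$, so that $(\Rm/\Pmax)^{1+1/c}=(1/\Pg)^{c+1}=k^{c+1}$, and the upper bound for $\mathfrak{S}[n\mapsto n^c]$ is $\bigO(k^{c+1})$. The matching lower bound $\Omega(k^{c+1})$ then holds for \emph{every} strategy by an argument close in spirit to the indistinguishability idea you sketch for your second family: until $s_\text{goal}$ is visited, all observed labels coincide, so good and bad tries cannot be told apart; a good try needs $\Rm/\Pmax$ expected steps to first reach $s_\text{goal}$; about $1/\Pg$ tries are needed before one is good; and a $(1-\Pg)$ fraction of the resulting $\Rm/(\Pg\Pmax)$ expected steps necessarily falls before the last restart, giving $\expected[S]\ge (1-\Pg)\,\Rm/(\Pg\Pmax)=k^{c+1}-k^{c}$. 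This coupling of the parameters is not a convenience but the essence of the statement: it is exactly what shows that the trade-off between the two terms is real, and hence that choosing the best exponent $c$ requires knowledge of how $\Rm/\Pmax$ relates to $\Pg$. If you want to salvage your approach, you should abandon the split and instead pick, for your fixed $c$, the one-parameter subfamily along which both terms grow at the same rate.
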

This proves two points: first, our bounds cannot be substantially improved.  Second, one necessarily needs information on  $\frac{\Rm}{\Pmax }$ and $\Pg$ to pick an optimal value for $c$; without any information every value is equally good.
\begin{proof}
Consider the family of Markov chains at the top of Figure \ref{MC:SyntheticPRm}. 
We take an arbitrary $k>1$ and set $M=k^{c-1}$ and $p=q=1/k$. With this choice we have
$\Pg = \Pmax  = 1/k$, and $\Rm=k^{c-1}$. By Lemma \ref{theorem:totalsteps}, 
the strategy $\mathfrak{S}[n\mapsto n^c]$. satisfies  $\expected[S] \in \bigO((\Rm/\Pmax )^{1+1/c}+ (1/\Pg)^{c+1})=\bigO (k^{c+1})$

We compare this with the optimal number of expected steps before the final restart. 
Since runs that visit $s_\text{goal}$ at least once are good w.p.1, any optimal strategy 
stops restarting exactly after the visit to  $s_\text{goal}$. We claim that every
such strategy satisfies $\expected[S] \ge \Rm/(\Pg \Pmax )(1-\Pg)$. For this, we make four observations. First, the probability of a good run is $\Pg$. Second, the expected number of steps of a good run until the first visit to $s_\text{goal}$ is $\Rm/ \Pmax $. Third, the smallest number of steps required to distinguish a bad run, i.e.~being in the left \BSCC{}, from a good run is equal to $\Rm$, because until $\Rm$ steps are executed, all states visited carry the same label. Hence, every strategy takes $\Rm/(\Pg \Pmax )$ steps on average before reaching the state $s_\text{goal}$ for the first time. Fourth, on average $1/\Pg$ tries are required to have one try result in a good run. Hence, on average at least $\frac{1/\Pg-1}{1/\Pg}$ of the $\Rm/(\Pg \Pmax )$ steps happen before the last restart. Since $\frac{1/\Pg-1}{1/\Pg}=(1-\Pg)$, this proves the claim. Now $\Rm/(\Pg \Pmax )(1-\Pg)=k^{c+1}-k^c\in \Theta(k^{c+1})$ and we are done. 
\end{proof}

\section{Experiments}
\label{sec:experiments}
We report on experiments on three kinds of systems. First, we conduct experiments  on two synthetic families of Markov Chains. Second, we repeat the experiments of  \cite{EKKW21} on models from the standard \acron{PRISM} Benchmark Suite \cite{DBLP:conf/qest/KwiatkowskaNP12} using our black-box strategies.  Finally, we conduct experiments on population protocols from the benchmark suite of the Peregrine tool \cite{BlondinEJ18,EsparzaHJM20}.

\medskip\noindent\textbf{Synthetic Experiments.}
\begin{figure}[t]
	\begin{equation*}
		\begin{tikzcd}
			& & & & \mathbf{s_\text{goal}}\ar[bend right,dll,"1"']& \\
			s_\text{sink} \ar[loop left]& s_{\text{start}}\ar[l, "1-q"']\ar[r,"q"]& s_1\ar[r,"1"]&\dots\ar[r,"1"]&s_{M-1}\ar[r,"1-p"]\ar[u,"p"'] &s_M\ar[bend left,lll,"1"]
		\end{tikzcd}
      \end{equation*}
      \begin{equation*}
\begin{tikzcd}
		s_\text{sink} \ar[loop left]& s_{\text{start}}\ar[l, "1-p"']\ar[r,"p"]&  s_{2}\ar[ll, bend left, "1-p"']\ar[r,"p"]&\dots  \ar[lll, bend left, "1-p"', in=140] \ar[r,"p"]&  s_{M}\ar[llll, bend left, "1-p"', in=130]\ar[r,"p"] &  s_\text{goal}\ar[loop right]
	\end{tikzcd}
\end{equation*}
	\caption{Two families of Markov chains. The initial state is $s_\text{start}$. The good runs are those that visit $s_\text{goal}$ infinitely often. For the top chains, $\Pg=q$, $\Rm  = M$, and  $\Pmax  = p$. For the bottom chains, $\Pg=p^M$, $\Rm =M$, $\Pmax =p^M$.}
	\label{MC:SyntheticPRm}
	\end{figure}
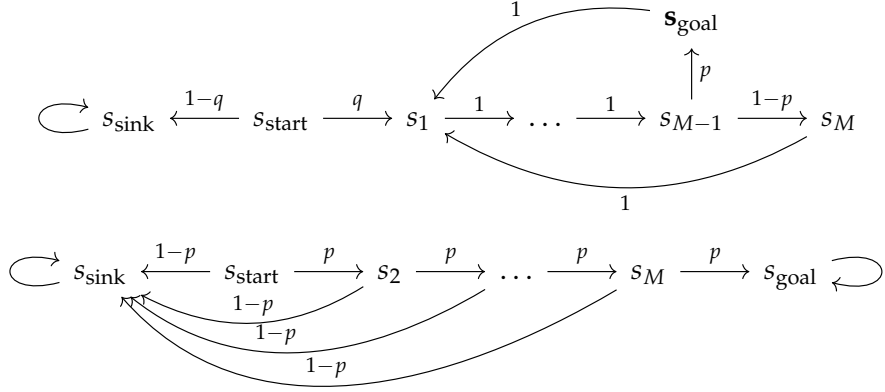
Consider the two (families of) labeled Markov chains at the top of Figure \ref{MC:SyntheticPRm}. The labels are
$a$ and $b$. In the top chain, state $s_{\textit{goal}}$ is labeled by $a$, all others by $b$. In the bottom chain, the states $s_2$ to $s_M$ and $s_\text{goal}$ are labeled by $\{a,s_\text{start}\}$ and $s_\text{sink}$ by $b$. The language $L$ is the set of words containing infinitely many occurrences of $a$.  In the top chain at the initial state we go right or left with probability $q$ and $(1-q)$, respectively. Runs that go left are bad, and runs that go right are good w.p.1. It follows $\Pg=q$, $\Rm  = M$, and  $\Pmax  = \min (p,q)$. In our experiments we fix $q = 1/2$. By controlling $M$ and $p$, we obtain chains with different values of $\Rm $ and $\Pmax $ for fixed $\Pg=1/2$. 
In the bottom chain, $R_\beta=R_\gamma =1$, $\Rm =r_\gamma=M$, $p_\gamma=p^M$, $p_\beta=(1-p)$ and $\Pmax =\min (p^M,1-p)$ and $\Pg=p^M$. 

\begin{figure}[h!]
	\centering
	\begin{subfigure}{.5\textwidth}
		\centering
		\resizebox{
			\linewidth}{!}{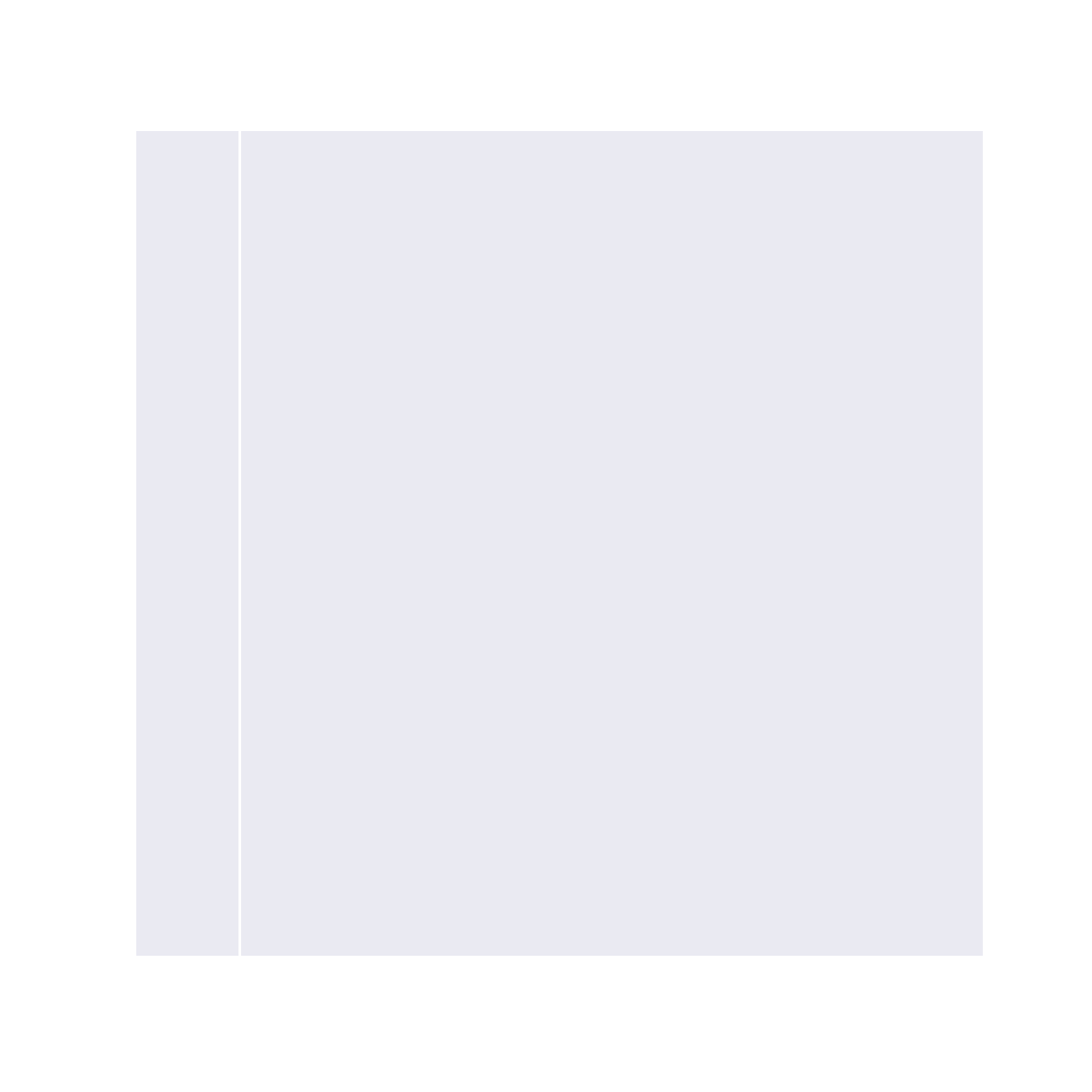}
		\label{subfig:SynthExpRp}
	\end{subfigure}%
	\begin{subfigure}{.5\textwidth}
		\centering
			\resizebox{
			\linewidth}{!}{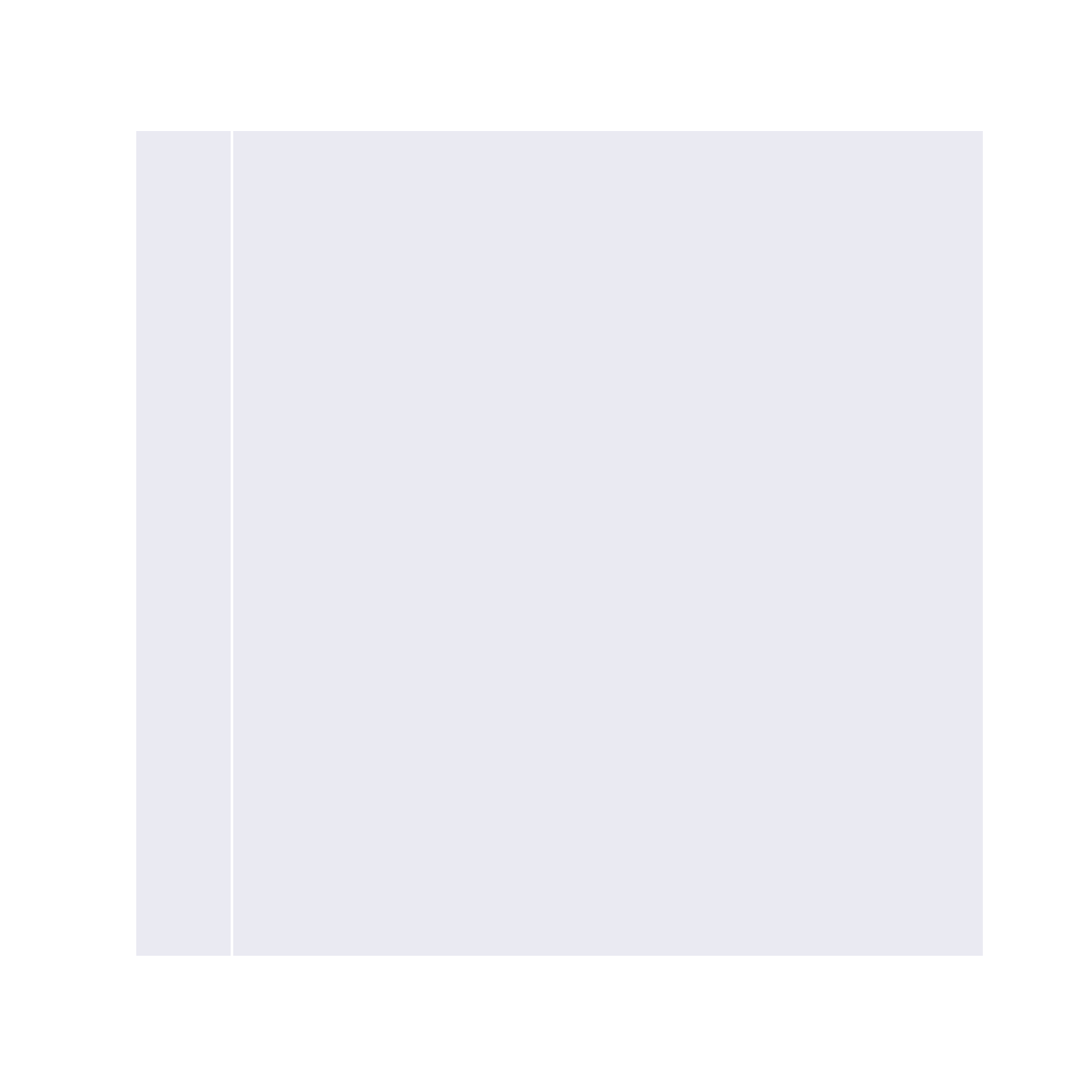}
		\label{subfig:SynthExpPgood}
	\end{subfigure}
	\caption{ 	
	On the left, double-logarithmic plot of the expected total number of steps before the last restart $\expected(S)$ for the chain at the top of Figure \ref{MC:SyntheticPRm} as a function of $\Rm /\Pmax$ for strategies (\ref{fig:strat}) with $f(n)=n^c$ for varying $c$. On the right, same for the bottom chain as a function of $1/\Pg$. The plots also show linear regressions. The leading exponent can be taken from the legend.}
	\label{fig:SynthExp}
\end{figure}
Recall that the bound obtained in the last section is $\Esteps \le f(c)(\Rm /\Pmax )^{1+1/c}+g(c) (1 / \Pg)^{c+1}$
where $f(c)$ and $g(c)$ are fast-growing functions of $c$.
If $\Pg$ and $\Rm /\Pmax $ are small, then $f(c)$ and $g(c)$ dominate the number of steps, and hence strategies with small $c$ should perform better. The data confirms this prediction.
Further, for fixed $\Pg$, the bound predicts $\Esteps \in O((\Rm /\Pmax )^{1+1/c})$, and so for growing $\Rm /\Pmax $ strategies with large $c$ should perform better. The left diagram confirms this. Also, the graphs become straight lines in the double logarithmic plot, confirming the predicted polynomial growth. 
Finally, for $\Rm /\Pmax $ and $1/\Pg$ growing roughly at the same speed as in the lower Markov chain, the bound predicts $\Esteps \in O(1 / \Pg^{c+1})$ for $c=2,3$ and $\Esteps \in O(M^2/ \Pg^{c+1})$ for $c=1$, and hence for growing $\Pg$ and $\Rm /\Pmax $, strategies with small $c$ perform better.  Again, the right diagram confirms this.

\medskip\noindent\textbf{Experiments on the \acron{PRISM} Data set.}
We evaluate the performance of our black-box testing strategies for different values of $c$ on discrete time Markov chain benchmarks from the \textsc{\acron{PRISM}} Benchmark suite \cite{DBLP:conf/qest/KwiatkowskaNP12}, and compare them with the strategies of  \cite{EKKW21} for fully observable systems. Table \ref{tableconcur} shows the results.
The properties checked are of the form $\mathbf{GF},  (\mathbf{GF}\rightarrow \mathbf{FG})$, or their negations. We add a gridworld example\footnote{Unfortunately, the experimental setup of \cite{EKKW21} cannot be applied to this example \cite{We22}.} denoted $\overline{\mathtt{GW}}$, with larger values of the parameters, to increase the number of states to $\sim 5\cdot 10^8$. When trying to construct the corresponding Markov chain, Storm experienced a timeout. Runs are sampled using the simulator of the \textsc{Storm} Model Checker \cite{DBLP:journals/corr/abs-2002-07080} and the python extension \textsc{Stormpy}. We abort a run after  $10^6$ (Up to $3\cdot 10^7$ for the gridworld examples $\mathtt{gw}$,  $\overline{\mathtt{gw}}$, and $\overline{\mathtt{GW}}$) steps without a restart. The probability of another restart is negligibly small. 

The Cautious\textsubscript{10}- and the Bold\textsubscript{$0.1$}-strategy of \cite{EKKW21}
store the complete sequence of states observed, and so need linear memory in the length of the sample. Our strategies use at most a logarithmic amount of memory, at none or little cost in the number of steps to the last restart. Our strategies never timeout and, surprisingly, often require \emph{fewer} steps than fully-observable ones. In particular, the strategies for fully observable systems cannot handle $\overline{\text{gridworlds}}$, and only the bold strategy handles gridworld. One reason for this difference is our strategies' ability to adapt to the size of the chain automatically by increasing values of $f(n)$ as $n$ grows. In two cases (nand and bluetooth) the fully observing strategies perform better by a factor of $\sim 2$ to $\sim 3$. 
In comparison to the improvement by a factor of $\sim 50$ in scale\textsubscript{10} and a factor of $\sim 90$ in gridworld of the newly presented black-box strategies over the whitebox strategies, this is negligible.
\begin{table}[h]
	\begin{center}
		\begin{tabular}{l|n{5}{0}H|n{6}{0}H|n{5}{0}H|n{5}{0}H|HHn{2}{0}H|n{6}{0}H|n{6}{0}H|n{7}{0}H|}\toprule
			& \multicolumn{2}{l}{nand} & \multicolumn{2}{l}{bluetooth} & \multicolumn{2}{l}{scale\textsubscript{$10$}}&\multicolumn{2}{l}{crowds}&\multicolumn{2}{H}{Crowds}&\multicolumn{2}{l}{herman}&\multicolumn{2}{c}{\texttt{gw}}&\multicolumn{2}{c}{$\overline{\mathtt{gw}}$}&\multicolumn{1}{c}{$\overline{\mathtt{GW}}$}\\ \midrule
			
			$\#$ states & \multicolumn{1}{c}{7$\cdot$10$^7$} & & 143291 & & 121 & & \multicolumn{1}{c}{1$\cdot$10$^7$}& & 524288 & & \multicolumn{1}{c}{5$\cdot$10$^5$}& & 309327 & & 309327 & & \multicolumn{1}{c}{5$\cdot$10$^8$} & \\
			$c=1$& 31246 & 90 & 4428 & 14 & 116 &10&{\npboldmath} 44&1& {\npboldmath}437& 5&{\npboldmath}2& 1& 486&8 & 171219 &319 & 8082659&1207\\
			$c=2$& 18827 & 18 & 4548 & 9 & {\npboldmath} 75&4&61&1& 937& 5&{\npboldmath}1& 1&404&4 & {\npboldmath}152127 & 51 & 4883449 & 122\\
			$c=3$& 32777 & 10 & 7615 & 6 & 179&3&99&1& 6352& 5&{\npboldmath} 1& 1&{\npboldmath}293&3&579896&237&{\npboldmath}4252263 & 40\\ \midrule
			Bold\textsubscript{$0.1$}& 10583 & 6 & 4637 & 4 & 14528 & 1 & 199 & 1 & & &{\npboldmath} 0 & 0& \multicolumn{1}{c}{\texttt{TO}}& \multicolumn{1}{H|}{\texttt{TO}}& \multicolumn{1}{c}{\texttt{TO}}& \multicolumn{1}{H|}{\texttt{TO}}&\multicolumn{1}{H}{-} & \\
			Cautious\textsubscript{$10$} & {\npboldmath} 6900 & 5 & {\npboldmath} 2425 & 5 & 3670 &1 & 101 & 1 & & & \multicolumn{1}{c}{\texttt{TO}} & \multicolumn{1}{H|}{\texttt{TO}} & 26361 &  \multicolumn{1}{H|}{0.9}& \multicolumn{1}{c}{\texttt{TO}}& \multicolumn{1}{H|}{\texttt{TO}}&\multicolumn{1}{H}{-} & 
			\\ \bottomrule
		\end{tabular}
	\end{center}
	\caption{Average number of steps before the final restart, averaged over 300 (100 for Herman and $\overline{\mathtt{GW}}$) runs. Results for our strategies for $c=1,2,3$, and the bold and cautious strategies of  \cite{EKKW21}.}
	\label{tableconcur}
\end{table}
\begin{table}[h]
	\begin{center}
	\begin{tabular}{l|n{6}{0}l|n{4}{0}l|n{8}{0}l|n{8}{0}l|} \toprule
		& \multicolumn{2}{c}{AvC\textsubscript{17,8}(faulty)} &    \multicolumn{2}{c}{Maj\textsubscript{$\le 12$}(faulty)} & \multicolumn{2}{c}{AvC\textsubscript{17,8}}  & \multicolumn{2}{|c}{Maj\textsubscript{5,6}} \\ \midrule
		$c=1$ & 13645 & \texttt{ce} & 872 & \texttt{ce} &  126294 &\texttt{true}   & 4264508 & \texttt{ge}\\
		$c=2$ & 181746 & \texttt{ce} & 4763 &\texttt{ce} & 10485163 & \texttt{true}   & 11878533 & \texttt{ge} \\
		\midrule
		Peregrine & & \texttt{TO} & & \texttt{ce} & & \texttt{TO} &  & \texttt{true}
		\\ \bottomrule
	\end{tabular}	
	\end{center}
	\caption{Testing population protocols with the strategies $\mathfrak{S}[n\mapsto n^c]$. Experiments were run 100 times, averaging the number of steps to the last restart with a restart threshold of 250 for Average and Conquer (AvC) and \numprint{10000} for the Majority Protocol.}
	\label{tablepp}
	\vspace{-0.5cm}
\end{table}

\medskip\noindent\textbf{Experiments on population Protocols.} Population protocols are consensus protocols in which a crowd of indistinguishable agents decide a property of their initial configuration by reaching a stable consensus \cite{AADFP06,BlondinEJ18}. The specification states that for each initial configuration the agents eventually reach the right consensus (property holds/does not hold). We have tested our strategies on several protocols from the benchmark suite of Peregrine, the state-of-the-art model checker for population protocols \cite{BlondinEJ18,EsparzaHJM20}. The first protocol 
 of Table \ref{tablepp} is faulty, but Peregrine cannot prove it; our strategy finds initial configurations for which the protocol exhibits a fault. For the second protocol both our strategies and Peregrine find faulty configurations. The third protocol is correct; Peregrine fails to prove it, and our strategies correctly fail to find counterexamples.  The last protocol is correct, but in expectation consensus is reached only after an exponential number of steps in the parameters; we complement the specification, and search for a run that achieves consensus. Thanks to the logarithmic memory requirements, our strategies can run deep into the Markov chain and find the run.

\section{Conclusions}
We have studied the problem of testing partially observable stochastic systems against $\omega$-regular specifications in a black-box setting where testers can only restart the system, have no information on size or probabilities, and cannot observe the states of the system, only its outputs. We have shown that, despite these limitations, black-box testing strategies exist. We have obtained asymptotically optimal bounds on the number of steps to the last restart.
Surprisingly, our strategies never require many more steps than the strategies for fully observable systems of \cite{EKKW21}, and often even less. Sometimes, the improvement is by a large factor (up to $\sim 90$ in our experiments)
or the black-box strategies are able to solve instances where the strategies of \cite{EKKW21} time out.

\bibliographystyle{plain}
\bibliography{ref.bib}
\appendix

\section{Proofs of Section \ref{sec:blackboxtesting}}

\rabinlemma*

\begin{proof}
Let $\dra = (\draS, \Sigma, \draTr, \draInit, \draAcc)$ be a \DRA{} recognizing $\lang \subseteq \Sigma^\omega$, and assume that $\draAcc=\{(E_1, F_1), \ldots, (E_k, F_k)\}$. Let $\strategy_k$ be a black-box strategy for the Rabin language $\rabin_k$.  We construct a black-box strategy $\strategy_\lang$ for $\lang$. 

Let $w = \ell_1 a_1 \ell_2 \cdots \ell_{n-1} a_n \ell_n \in (\Sigma \times \{\restact, \mathsf{c}\})^* \Sigma$. We define the action $\strategy_\lang(w)$ as follows. Let $\draInit q_1 \ldots q_n$ be the unique run of $\dra$ on the word $\ell_1\ell_2 \ldots \ell_n \in \Sigma^*$. Define $v = \ell_1' a_1 \ell_2' \cdots \ell_{n-1}' a_n \ell_n' \in (2^{M_k} \times \{\restact, \mathsf{c}\})^* 2^{M_k}$ as the word given by: $\infl_j \in \ell_i'$ if{}f $q_i \in E_j$, and  $\finl_j \in \ell_i'$ if{}f $q_i \in F_j$. (Intuitively, we mark with $\infl_j$ the positions in the run at which the \DRA{} visits $E_j$, and with $\finl_j$ the positions at which the \DRA{} visits $F_j$.) We set $\strategy_\lang(w) \coloneqq  \strategy_k(v)$.  

We claim that $\strategy_\lang$ is a black-box strategy for $\lang$. 
To prove this claim, let $\Mc = (\St, \init, \Sigma, \Lab, \Pm)$ be an arbitrary Markov chain with labels in $\Sigma$.
Define the product of $\Mc$ and $\mathcal{A}$ as the labeled Markov chain $\Mcdra =(\St \times \draS, \init', 2^{M_k}, \Lab', \Pm')$, where 
\begin{itemize}
\item $\init' = (\init, \draInit)$;  
\item $\infl_j \in \Lab'(s,q)$ if if{}f $q \in E_j$ and  $\finl_j \in \Lab'(s,q)$ if{}f $q \in F_j$;
\item $\Pm'((s,q),(s',q')) = \Pm(s,s')$ if $q'=\draTr(q,\Lab(s))$ and $0$ otherwise.
\end{itemize} 

Since $\mathcal{A}$ is deterministic, for every run $\pi =s_{in} s_1 s_2 \cdots$ of $\Mc$ there exists a unique run $\pi'=(s_{in} , \draInit) (s_1, q_1) (s_2, q_2) \cdots$ of $\Mcdra$, and the mapping that assigns $\pi'$ to $\pi$ is a bijection. By the definition of the accepting runs of $\mathcal{A}$, we have $\Lab(\pi) \in \lang$ if{}f $\draInit q_1 q_2 \cdots$ is an accepting run of $\mathcal{A}$ and, by the definition of the product, if{}f $\Lab'(\pi') \in \rabin_k$. Further, by the definition of $\Pm'$, we have $\Pr_{\Mc}(L) = \Pr_{\Mcdra}(L_C)$. 

Consider now the Markov chains $\restartMDPst{\strategy_k}$ and $(\MDP \otimes \dra)_r^{\strategy_\lang}$.
A run of $\restartMDPst{\strategy_k}$ can be seen as an infinite sequence $\Pi = \pi_0 \, r \, \pi_1 \, r \cdots$, where $\pi_0, \pi_1, \cdots$ are paths of $\Mc$ indicating that $\Mc$ is restarted after executing $\pi_0, \pi_q$ etc., and similarly for $(\MDP \otimes \dra)_r^{\strategy_\lang}$. (We omit the occurrences of the continue action $c$.) We extend the mapping above so that it assigns to $\Pi$ the run $\Pi' = \pi_0' \, r \, \pi_0' \, r \, \cdots$. We have that $\Pi$ is a run of $\restartMDPst{\strategy_k}$ satisfying  $\tail{\Pi} \in \lang$ if{}f $\Pi'$ is a run of $(\MDP \otimes \dra)_r^{\strategy_\lang}$ satisfying  $\tail{\Pi'} \in \rabin_k$. Further, since the probabilities of the transitions in the run coincide,  $\strategy_\lang$ is a black-box strategy for $\lang$, and the claim is proved.
\end{proof}

\section{A general definition for the progress radius and probability}

\label{appendix:progress}
We will now restate the definitions for the progress radius and probability for Rabin languages with more than one Rabin pair. They only differ by some technicalities from the definitions given in Section \ref{subsec:progress}. For convenience, we have underline all the differences.

\medskip\noindent\textbf{Good runs and good \BSCC{}s.} We extend the definition of good paths to good runs and good \BSCC{}s of a Markov chain.
A run $\run=s_0s_1s_2\dots$ is \emph{good} if \ul{there exists a Rabin pair $(\infl_i, \finl_i)$ such that $\infl_i$ appears infinitely often in $\run$ and $\finl_i$ finitely often}, and \emph{bad} otherwise. 
So a run $\run$ is good if{}f there exists a decomposition of $\run$  into an infinite concatenation $\run \coloneqq  \mypath_0 \odot \mypath_1 \odot \mypath_2 \odot \cdots$ of non-empty paths \ul{such that there exists an $1\le i\le k$ such that $\mypath_1, \mypath_2, \ldots$ are $i$-good}. We let $\Pg$ denote the probability of the good runs of $\Mc$.

A \BSCC{} of $\Mc$ is \ul{$i$}-\emph{good} if it contains at least one state labeled by $\infl_i$ and no state labeled by $\finl_i$. If a \BSCC{} is not \ul{$i$-good for any $1\le i\le k$} we call it bad.

\begin{definition}[Good-reachability and good-witness radii]
	Let $B_\gamma$ be the set of states of $\Mc$ that belong to good \BSCC{}s and let $\St_\gamma$ be the set of states from which it is possible to reach $B_\gamma$
	and let $s \in \St_\gamma$. A non-empty path $\pi$ starting at $s$ is a \emph{good progress path} if 
	\begin{itemize}
		\item $s \in \St_\gamma \setminus B_\gamma$, and $\pi$ ends at a state of $B_\gamma$; or
		\item $s \in B_\gamma$, and $\pi$ ends at a state with observation \ul{$\infl_i$ and $s$ is in an $i$-good BSCC}.  
	\end{itemize}
	The \emph{good-reachability radius} $r_\gamma$ is the maximum, taken over every $s \in \St_\gamma \setminus B_\gamma$, of the length of a shortest progress path for $s$.
	The \emph{good-witness radius} $R_\gamma$ is the same maximum, but taken over every $s \in B_\gamma$.
\end{definition}

The bad-reachability and bad-witness radii, denoted $r_\beta$ and $R_\beta$ are defined similarly. Only the notion of progress path of a state$s \in B_\beta$ 
needs to be adapted. Loosely speaking, \ul{for every state with observation $\finl_i$ a bad BSCC contains at least one state with observation $\infl_i$.}
Accordingly, if no state of the \BSCC{} of $s$ has an observation $\infl_i$ \ul{for any $i$}, then any non-empty path starting at $s$ is a progress path.
and otherwise a progress path of $s$ is a non-empty path starting at $s$ that, \ul{for every state with observation $\infl_i$ in the BSCC of $s$, contains a state with observation $\finl_i$.}
In other words, a progress path starting in a bad \BSCC{} $B$ visits states with \ul{all observations $\finl_i$ that prevent $B$ from being a good BSCC.} Note that we leave the bad progress radii and probabilities undefined if the chain does not contain a bad \BSCC{}, and hence runs are good w.p.1.

\begin{definition}[Progress radius]
	The \emph{progress radius} $\Rm$ of $\Mc$ is the maximum of $r_\gamma$, $R_\gamma$, $r_\beta$, and $R_\beta$. 
\end{definition}

\medskip\noindent\textbf{Progress probability.} The progress probability is now defined in the same way as it is done in the main part of the paper. From any state of the Markov chain it is possible to ``make progress'' by executing a progress path of length $\Rm$. 
However, the probability of such paths varies from state to state. Intuitively, the progress probability gives a lower bound on the probability
of making progress. 

\begin{definition}
	Let $B_\gamma$ be the set of states of $\Mc$ that belong to good \BSCC{}s, let $\St_\gamma$ be the set of states from which it is possible to reach $B_\gamma$
	and let $s \in \St_\gamma$. The \emph{good-reachability probability} $p_\gamma$ is the minimum, taken over every $s \in \St_\gamma\setminus B_\gamma$, of the probability 
	that a path with length $r_\gamma$ starting at $s$ contains a good progress path. The \emph{good-witness probability} $P_\gamma$ is the same minimum, but taken over every $s \in B_\gamma$ with paths of length $R_\gamma$.
	The corresponding bad probabilities are defined analogously. The \emph{progress probability} $\Pmax$ is the minimum of $p_\gamma, P_\gamma, p_\beta, P_\beta$.
\end{definition}

\section{Proofs of Section \ref{sec:strategy}}
In this section, we will give the technical proofs omitted in the main paper.
\ProbRestartGood*

\begin{proof}
	If $\lfloor f(n)/\Rm \rfloor \leq 1$ then the inequality holds trivially. So assume $f(n) \geq 2 \Rm $.
	
	
	Let $\run \in \NB_n$. Observe that $\run$ eventually reaches a \BSCC{} of $\Mc$ w.p.1 and, since $\run$ only visits states of $S_\gamma$, that \BSCC{} is good. Let $B$ be this \BSCC{}. Assume that $\restart(\run) \geq n$. We consider the following cases, where we start counting steps immediately after the $(n-1)$-th restart and, for $a <b$, the path $[a, b]$ is  the path that starts immediately before step $a$, and ends immediately after step $b$.
	\begin{itemize}
		\item After $2f(n)$ steps, $\run$ has not yet reached $\mathcal{B}$. \\
		By the definition of $p_\gamma$, this happens with probability at most $(1-p_\gamma)^{f(n)/\RgBSSC}$.
		\item After $2f(n)$ steps, $\run$ has already reached $\mathcal{B}$. Further, the $n$-th restart happens 
		in the path $[2 f(n), 2(f(n)+R_\gamma)-1]$. \\
		In this case, by the definition of $\mathfrak{S}[f]$, the second half of the last path sample does  
		does not contain any state labelled with $e_i$ such that the \BSCC{} is $i$-good. It follows that the path $[f(n)+R_\gamma+1, 2f(n)]$ does not visit $W_B$. By the definition of $P_\gamma$ and $R_\gamma$, this happens with probability at most $(1 -P_\gamma)^{\lfloor f(n)/R_\gamma\rfloor-1}$. 
		\item After $f(n)$ steps, $\run$ has already reached $\mathcal{B}$. Further, the $n$-th restart happens 
		after the step $2f(n)+2R_\gamma-1$. \\
		In this case we let $k \ge\lfloor f(n)/R_\gamma\rfloor+1$ be the smallest number such that the path $[(k+1)R_\gamma+1, 2kR_\gamma]$ does not contain any witness states, i.e.~ states labelled with $\infl_i$.

	\end{itemize}
	
	By the definition of $\mathfrak{S}[f]$, if $\run$ restarts in the interval  $[2lR_\gamma+1, 2(l+2)R_\gamma-1]$ 
	 and has reached a good \BSCC{} $\mathcal{B}$ in the first $f(n)$ steps, then it is covered by the third case for some $k$ with $\lfloor f(n)/R_\gamma\rfloor+1\le k \le l$. Because $k$ is the smallest $k$ satisfying this property, and we are not in the second case, the run performed a progress path of $\mathcal{B}$ between step $(l-1)R_\gamma+1$ and $lR_\gamma$, otherwise we would have already counted this case. Hence we can bound the sum of probabilities of the last two cases by $(1-P_\gamma)^{\lfloor f(n)/R_\gamma\rfloor-1}
		+ \sum_{k=\lfloor f(n)/R_\gamma\rfloor-1}^\infty P_\gamma(1-P_\gamma)^k$.
	

So we get:

\begin{align*}
	& Pr[\restart \ge n \mid \NB_n] \\
	\le & (1-p_\gamma)^{\lfloor f(n)/r_\gamma\rfloor}+(1-P_\gamma)^{\lfloor f(n)/R_\gamma\rfloor-1}
	+ \sum_{k=\lfloor f(n)/R_\gamma\rfloor-1}^\infty P_\gamma(1-P_\gamma)^k\\
	\le & 3(1-\Pmax )^{\lfloor f(n)/\Rm \rfloor-1}
\end{align*}
\end{proof}

\ProbRestart*

\begin{proof}
	We have 
	\begin{align*}
		\Pr [\restart\ge n\mid R\ge n-1] = & \Pr[\restart \ge n \mid \NB_{n}]  \cdot \Pr[\NB_{n} \mid R \ge n-1] + \\
		& \Pr[\restart \ge n \mid \overline{\NB}_n] \cdot \Pr[\overline{\NB}_n \mid \restart \ge n-1].
	\end{align*}
	Let $\alpha \coloneqq  3(1-\Pmax )^{\lfloor f(n)/\Rm \rfloor-1}$.
	Applying  Lemma \ref{lemma:PRestartGood} and $\Pr[\restart \ge n \mid \overline{\NB}_n]\le 1$, we get
\begin{align*}
                    & \Pr [\restart\ge n\mid \restart\ge n-1]  \\
		 \leq \; & \alpha  \Pr[\NB_{n} \mid \restart \ge n-1] + 
		 \Pr[\overline{\NB}_n \mid \restart \ge n-1] \\
		 \leq \; & \alpha  \Pr[\NB_{n} \mid \restart \ge n-1] + 
		 (1 - \Pr[\NB_n \mid \restart \ge n-1]) \\
		 \leq \; & (\alpha -1)   \Pr[\NB_{n} \mid \restart \ge n-1] + 
		 1	  
\end{align*}

W.p.1, good runs of $\Mc$ only visit states of $S_\gamma$. (Indeed, if a good run visits some state outside $S_\gamma$, then the run can only reach a bad BSSC. Since the run is good, the run cannot visit any BSSC at all, which can only happen with probability $0$.) Hence, the probability to only visit states of $S_\gamma$ before restarting is at least $\Pg$ for arbitrary strategies, i.e. $\Pr[\NB_{n} \mid \restart \ge n-1] \ge \Pg$. It follows 
	\begin{align*}
		\Pr [\restart\ge n\mid \restart\ge n-1] \le \; &  (\alpha -1) \Pg +1 = 1 - \Pg ( 1 - \alpha) 
	\end{align*}
\end{proof}
\section{Proofs of Section \ref{sec:quantitative}}

We prove Lemma \ref{lemma:boundnsmall}. We need a technical result:

\begin{lemma}[A technical lemma]
	\label{lemma:sumnsquaredinfinity}
	\label{lemma:sumnsquared}
	\label{lemma:verygoodlemma}
	For $c,X\in \Z_{\ge 0}$ and $0<p<1$ we have that
	\[
	\sumi_{n=X} n^c\cdot p^{n-X}\le (c+1)!\left(\frac{(X+c)^c}{1-p}+\frac{1}{(1-p)^{c+1}}\right).
	\]
\end{lemma}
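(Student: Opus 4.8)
The plan is to prove the bound by induction on $c$, using a ``multiply by $(1-p)$ and telescope'' manipulation to lower the exponent from $c$ to $c-1$. Throughout write $S_c(X) = \sum_{n=X}^\infty n^c\, p^{\,n-X}$, so that the claim is $S_c(X)\le (c+1)!\bigl(\tfrac{(X+c)^c}{1-p}+\tfrac{1}{(1-p)^{c+1}}\bigr)$. For the base case $c=0$ the series is geometric, $S_0(X)=\sum_{i\ge 0}p^i=\tfrac{1}{1-p}$, and the right-hand side is $1!\cdot\bigl(\tfrac{1}{1-p}+\tfrac{1}{1-p}\bigr)=\tfrac{2}{1-p}$, so the inequality holds with room to spare.

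For the inductive step, the key is to rewrite $(1-p)S_c(X)$ by shifting the index in the $p\,S_c(X)$ copy and telescoping the discrete derivative; after absorbing a stray factor $p\le 1$ produced by the index shift (the summands being nonnegative) this yields $(1-p)S_c(X)\le X^c+\sum_{n=X}^\infty\bigl((n+1)^c-n^c\bigr)p^{\,n-X}$. I would then control the discrete derivative via the binomial expansion together with the elementary inequality $\binom{c}{k}\le c\binom{c-1}{k-1}$ for $k\ge 1$, which collapses to $(n+1)^c-n^c\le c\,(n+1)^{c-1}$. Re-indexing identifies the remaining sum as $c\,S_{c-1}(X+1)$, giving the recurrence $(1-p)S_c(X)\le X^c+c\,S_{c-1}(X+1)$. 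Substituting the induction hypothesis for $S_{c-1}(X+1)$ (note $(X+1)+(c-1)=X+c$) and dividing by $(1-p)$ produces the three-term estimate
\[
S_c(X)\le \frac{X^c}{1-p}+\frac{c\cdot c!\,(X+c)^{c-1}}{(1-p)^2}+\frac{c\cdot c!}{(1-p)^{c+1}}.
\]

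The step I expect to be the main obstacle is showing this three-term bound is dominated by $(c+1)!\bigl(\tfrac{(X+c)^c}{1-p}+\tfrac{1}{(1-p)^{c+1}}\bigr)$ with the target constant matching \emph{exactly}; the difficulty is the middle term, whose $(1-p)^2$ in the denominator cannot be traded for $(1-p)^{-1}$ directly, so a naive term-by-term comparison fails as $p\to 1$. Writing $P=\tfrac{(X+c)^c}{1-p}$ and $Q=\tfrac{1}{(1-p)^{c+1}}$, I would observe the exact relation $\tfrac{(X+c)^{c-1}}{(1-p)^2}=P^{(c-1)/c}Q^{1/c}$ and apply weighted AM--GM, $P^{(c-1)/c}Q^{1/c}\le \tfrac{c-1}{c}P+\tfrac1c Q$. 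Combined with $\tfrac{X^c}{1-p}\le P$ and $\tfrac{c\cdot c!}{(1-p)^{c+1}}=c\cdot c!\,Q$, this gives coefficient $1+(c-1)c!$ on $P$ and coefficient $c!+c\cdot c!=(c+1)!$ on $Q$; since $1+(c-1)c!\le (c+1)!$ is equivalent to $1\le 2\,c!$, both coefficients are at most $(c+1)!$, which closes the induction. The degenerate case $c=1$ is harmless: there the AM--GM weight on $P$ vanishes and the middle term equals $Q$ outright. As a sanity check, one can alternatively bound $\sum_{i\ge 0} i^k p^i\le k!/(1-p)^{k+1}$ (from $\binom{i+k}{k}\ge i^k/k!$ and $\sum_i\binom{i+k}{k}p^i=(1-p)^{-(k+1)}$), expand $(X+i)^c$ binomially, and verify the leading $(1-p)^{-(c+1)}$ coefficient agrees, confirming that the constant $(c+1)!$ is of the right order.
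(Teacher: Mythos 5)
Your proof is correct, and it follows the same route as the paper's own proof: induction on $c$, multiplying by $(1-p)$ and telescoping, the bound $(n+1)^c-n^c\le c\,(n+1)^{c-1}$ via $\binom{c}{k}\le c\binom{c-1}{k-1}$, and the recurrence $(1-p)S_c(X)\le X^c+c\,S_{c-1}(X+1)$. The notable difference is at the end: the paper's proof simply substitutes the induction hypothesis and then \emph{asserts} the target bound $S_c(X)\le (c+1)!\bigl(\tfrac{(X+c)^c}{1-p}+\tfrac{1}{(1-p)^{c+1}}\bigr)$ in its last line, with no justification for why the three resulting terms (in particular the middle one, with $(1-p)^2$ in the denominator) are absorbed with the exact constant $(c+1)!$. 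You correctly identified this as the real obstacle --- a naive case split on whether $\tfrac1{1-p}\le X+c$ gives coefficient $2c\cdot c!$ on the $(1-p)^{-(c+1)}$ term, which overshoots $(c+1)!$ for $c\ge 2$ --- and your weighted AM--GM step, $P^{(c-1)/c}Q^{1/c}\le\tfrac{c-1}{c}P+\tfrac1c Q$, is precisely what is needed to land on coefficients $1+(c-1)\,c!$ and $(c+1)!$, closing the induction. You also handled a small sloppiness the paper commits: the telescoping identity actually produces an extra factor of $p$ on the sum, so it should be stated as an inequality (as you do) rather than the equality the paper writes. In short, your argument is the paper's argument with its gap filled; it could serve as a corrected version of the paper's own proof.
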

\begin{proof}
	We will proof the lemma by induction on $c$ starting with $c=0$. Then we have by the formula for the geometric progression that:
	\[
	\sumi_{n=X} p^{n-X}=\sumi_{i=0} p^i = \frac{1}{1-p}.
	\]
	This proves the the induction base case.
	Now assume we have proven 
	\[
	S_{c-1}(X)\coloneqq \sumi_{n=X} n^{c-1}\cdot p^{n-X}\le c!\left(\frac{(X+c-1)^{c-1}}{1-p}+\frac{1}{(1-p)^{c}}\right).
	\]
	Now consider 
	\[
	S_c(X)\coloneqq \sumi_{n=X} n^c\cdot p^{n-X}
	\]
	When multiplying by $(1-p)$ we get the following
	\begin{align*}
		(1-p)S_c(X)&=X^c+\sumi_{n=X} \left((n+1)^c-n^c\right)\cdot p^{n-X}\\
		&=X^c+\sumi_{n=X} \left(\sum_{k=1}^{c} \binom{c}{k}n^{c-k}\right)p^{n-X}\\
		&\le X^c+\sumi_{n=X}c\left(\sum_{k=0}^{c-1}\binom{c-1}{k}n^{c-k}\right)p^{n-X}\\
		&= X^c+\sumi_{n=X}c(n+1)^{c-1}p^{n-X}\\
		&= X^c+ c\cdot \sumi_{n=X+1}n^{c-1}p^{n-X-1}\\
		&=X^c+ cS_{c-1}(X+1)\\
		&\le X^c +c \cdot c!\left(\frac{(X+c)^{c-1}}{1-p}+\frac{1}{(1-p)^{c}}\right)\\
		S_c(X)&\le (c+1)!\left(\frac{(X+c)^c}{1-p}+\frac{1}{(1-p)^{c+1}}\right)
	\end{align*}
	This concludes the proof.
\end{proof}

\stepsfragment*
\begin{proof}
	\label{proof:stepsfragment}
	We consider three cases:
\begin{itemize}
\item[(1)] The run $\rho$ gets restarted for the $n$-th time in at most $2f(n)+2\Rm $ steps after the $(n-1)$-st restart. We can bound the expected number of steps in this case by $2f(n)+2\Rm $.
\item[(2)] The run $\rho$ executes at least step $2f(n)+2\Rm +1$ after the $(n-1)$-st restart without another restart, and only visits states in $S_\beta$. Then, the expected number of steps until a bad \BSCC{} $\mathcal{B}$ is reached is equal to $r_\gamma/p_\gamma$. After another $r_\gamma/p_\gamma$ steps, the entire second half of states visited since the last restart is now contained in $\mathcal{B}$. Then, the expected number of steps required to perform a progress path of the \BSCC{} is $R_\beta/P_\beta$. After at most $2f(n)$ additional steps, the strategy restarts. Hence, an upper bound of the expected number of steps in this case is $2r_\beta/p_\beta+R_\beta/P_\beta+2f(n)$. 
		
\item[(3)] The run $\rho$ reaches at least step $2f(n)+2\Rm +1$ after the $(n-1)$-st restart without another restart, and only visits states in $S_\gamma$. In this case it takes on average at most $2r_\gamma/p_\gamma$ steps to reach a good \BSCC{}. After that, we divide the rest of the run into blocks of length $2R_\gamma$. Let $v$ be the number such that the restart happens between steps $2vR_\gamma$ and $2(v+1)R_\gamma$. Then we have:
\begin{itemize}
\item[(a)]$\run$ visits an accepting state of the \BSCC{} between steps $(v-1)R_\gamma$ and $(v+1)R_\gamma$. \\
Otherwise the restart happens before step $2vR_\gamma$
\item[(b)] $\run$ does not visit accepting states of the \BSCC{} between step $(v+1)R_\gamma$ and $2vR_\gamma$. \\
Indeed, if we restart at step $2vR_\gamma$, then we have not visited an accepting state of the good \BSCC{} between step $vR_\gamma$ and $2vR_\gamma$. If we restart at step $2(v+1)R_\gamma$, the same applies to step $(v+1)R_\gamma$ and $2(v+1)R_\gamma$. For restarts between steps $2vR_\gamma$ and $2(v+1)R_\gamma$ a 
corresponding in-between statement is true. In all these cases the run never visits an accepting state of the \BSCC{}
between step $(v+1)R_\gamma$ and $2vR_\gamma$.  
\end{itemize}
The probability of (a) is at least $2P_\gamma$, and the probability of (b) is at most $(1-P_\gamma)^{v-1}$. But if (a) was not the case, we would have already counted it with probability $(1-P_\gamma)$ with at most $2R_\gamma$ steps less.
So the expected number of steps in the cases, in which we execute at most $2(v+1)R_\gamma$ steps after the $(n-1)$-st restart for some $v$, is bounded by the sum over $2P_\gamma (1-P_\gamma)^{v-1}$ mulitplied by the number of steps for all possible values of $v$.
\begin{equation*}
2f(n)+2\Rm +\sum_{v=1}^\infty 4 R_\gamma (v+1)P_\gamma(1-P_\gamma)^{v-1}=2f(n)+2\Rm +\frac{4R_\gamma}{P_\gamma(1-P_\gamma)}.
\end{equation*}
\end{itemize}
\noindent We obtain 
\begin{align*}
E[S_n\mid \restart\ge n-1] = \; & \big(2f(n)+2\Rm  \big) p_1 + \\
                                       & 2r_\beta/p_\beta+R_\beta/P_\beta+2f(n) p_2 + \\
                                       & 2r_\gamma/p_\gamma + \frac{4R_\gamma}{P_\gamma(1-P_\gamma)} +\left(2f(n)+2R_\gamma\right) (1- p_1 - p_2)
\end{align*}
where $p_1, p_2$ are the probabilities of (1) and (2), respectively. Using $p_1, p_2 \leq 1$ and simple arithemtic yields the generous bound:
	\begin{equation*}
		E[S_n\mid \restart\ge n-1]\le 2 (\Rm+f(n))+9\left(\frac{\Rm }{\Pmax (1-P_\gamma)}\right)
	\end{equation*}
\end{proof}

%

\totalsteps*
\begin{proof}
	\label{proof:totalsteps}
	By linearity of expectation, we have $\expected[S]=\sum_{i=1}^\infty \expected(S_n)$. The idea of the proof is to split the sum into two parts: for $n<X$, and for $n\ge X$. For $n<X$ we just approximate $\Pr[\restart\ge n-1]$ by $1$. For $n>X$ we can say more thanks to Lemma \ref{lemma:ChoiceOfX}: 
	\begin{align*}
		\Pr[\restart\ge n-1]&=\Pr[\restart\ge n-1\mid \restart\ge n-2] \dots \Pr[\restart\ge X+1\mid \restart\ge X]\cdot \Pr[\restart\ge X]\\
		&\le \prod_{k=\lceil X\rceil}^n \Pr[\restart\ge k\mid \restart\ge k-1]\\
		&\le \left(1-\Pg/2\right)^{n-X}
	\end{align*}
	This yields:
	\begin{align*}
		E[S]&=\sum_{n=0}^\infty E[S_n\mid \restart\ge n-1]\Pr[\restart\ge n-1]\\
		&\le \sum_{n=0}^XE[S_n\mid \restart\ge n-1]+
		\sum_{n=X}^\infty E[S_n\mid \restart\ge n-1]\cdot \left(1-\Pg/2\right)^{n-X}
	\end{align*}
	We bound the first summand applying Lemma \ref{lemma:boundnsmall}:
	\begin{align*}
	& \sum_{n=0}^X E[S_n\mid \restart\ge n-1] \\
		\le & \sum^X_{n=0}2n^c+2\Rm+9\left(\frac{\Rm }{\Pmax (1-P_\gamma)}\right)\le 2X^{c+1}+X\Rm \left(2+9\frac{1}{\Pmax (1-P_\gamma)}\right).
	\end{align*}
	Now bound the second summand, applying Lemma \ref{lemma:boundnsmall} again:
	\begin{align*}
		& \phantom{\le \; } \sum_{n=X}^\infty E[S_n\mid \restart\ge n-1]\cdot \left(1-\Pg/2\right)^{n-X} \\	
		& \le \sumi_{n=X}\left(2n^c+2\Rm+9\left(\frac{\Rm }{\Pmax (1-P_\gamma)}\right)\right)\left(1-\Pg/2\right)^{n-X}\\
		&\le \frac{2 \Rm }{\Pg} \left( 2+9\frac{1}{\Pmax (1-P_\gamma)}\right)+ 2\sumi_{n=X} n^c \left(1-\Pg/2\right)^{n-X}\\
		&\le\frac{2 \Rm }{\Pg} \left( 2+9\frac{1}{\Pmax (1-P_\gamma)}\right)+ 2(c+1)!\left(\frac{2(X+c)^c}{\Pg}+\frac{2^{c+1}}{\Pg^{c+1}}\right)
	\end{align*}
	where in the last step we used Lemma \ref{lemma:verygoodlemma}. 
	\begin{equation*}
		\Esteps \in \bigO \left((c+1)!\cdot 2^c\cdot \left(\frac{\Rm }{\Pmax }\right)^{1+1/c}+\frac{2^c(c+1)!}{P_{good}^{c+1}} + (c+1)!(2c)^{c+1}\right)
	\end{equation*}
	For a fixed value of $c$, i.e., for the specific strategy $f(n)=n^c$, this bound simplifies to
	\begin{equation*}
		\Esteps \in \bigO \left(\left(\frac{\Rm }{\Pmax }\right)^{1+1/c}+\frac{1}{\Pg^{c+1}}\right).
	\end{equation*}
\end{proof}
\end{document}